\definecolor{darkgreen}{rgb}{0,0.5,0}
\newtheorem{lemma}{Lemma}
\newtheorem{theorem}{Theorem}
\newtheorem{definition}{Definition}
\newtheorem{assumption}{Assumption}
\newtheorem{corollary}{Corollary}
\newtheorem{remark}{Remark}
\crefname{theorem}{Theorem}{Theorems}
\Crefname{lemma}{Lemma}{Lemmas}
\Crefname{claim}{Claim}{Claims}
\Crefname{fact}{Fact}{Facts}
\Crefname{remark}{Remark}{Remarks}
\Crefname{observation}{Observation}{Observations}
\Crefname{assumption}{Assumption}{Assumptions}
\Crefname{line}{Line}{Lines}
\newcommand{\eps}{\epsilon}
\newcommand{\cI}{\mathcal{I}}
\newcommand{\rb}[1]{\left( #1 \right)}
\newcommand{\eqdef}{\stackrel{\text{\tiny\rm def}}{=}}
\newcommand{\dmax}{d_{max}}
\newcommand{\pmax}{p_{max}}
\newcommand{\OPT}{\textsc{OPT}\xspace}
\newcommand{\ALG}{\textsc{ALG}\xspace}
\newcommand{\ALGRobust}{\ALG^{\textsc{robust}}}
\newcommand{\ALGOnline}{\ALG^{\textsc{online}}}
\newcommand{\ALGLazy}{\ALG^{\textsc{lazy}}}
\newcommand{\AOnline}{A^{\textsc{online}}}
\newcommand{\ARobust}{A^{\textsc{robust}}}
\newcommand{\ALazy}{A^{\textsc{lazy}}}
\newcommand{\BLazy}{B^{\textsc{lazy}}}
\newcommand{\OLazy}{O^{\textsc{lazy}}}
\newcommand{\ourAlg}{\textsc{Predict}\xspace}
\newcommand{\optAlg}{\textsc{Opt}\xspace}
\newcommand{\onlineAlg}{\textsc{Online}\xspace}
\newcommand{\dist}[2]{{d}({#1}, {#2})\xspace}
\newcommand{\hA}{\hat{A}}
\newcommand{\hO}{\hat{O}}
\newcommand{\hs}{\hat{s}}
\definecolor{darkred}{rgb}{0.5,0,0}
\definecolor{lightblue}{rgb}{0,0.4,0.8}
\definecolor{darkgreen}{rgb}{0,0.5,0}
\renewcommand{\eps}{\varepsilon}
\title{Online Page Migration with ML Advice}
\author{Piotr Indyk%
			\thanks{CSAIL, MIT, \protect\url{{indyk,slobo}@mit.edu}}
			\and
			Frederik Mallmann-Trenn
			\thanks{King's College London, \protect\url{frederik.mallmann-trenn@kcl.ac.uk}} 
			\and
			Slobodan Mitrovi\'{c}%
			\footnotemark[1]
			\and
			Ronitt Rubinfeld%
			\thanks{CSAIL, MIT, \protect\url{ronitt@csail.mit.edu}}
}
\date{}
\begin{document}

\maketitle

\begin{abstract}
We consider online algorithms  for the {\em page migration problem}  that  use predictions, potentially imperfect, to improve their performance.  The best known online algorithms for this problem, due to Westbrook'94 and Bienkowski et al'17, have competitive ratios strictly bounded away from 1. In contrast, we show that if the algorithm is given a prediction of the input sequence, then it can achieve a competitive ratio that tends to $1$ as the prediction error rate tends to $0$. Specifically, the competitive ratio is equal to $1+O(q)$, where $q$ is the prediction error rate. We also design a ``fallback option'' that ensures that the competitive ratio of the algorithm for {\em any} input sequence is at most $O(1/q)$. Our result adds to the recent body of work that uses machine learning to improve the performance of ``classic'' algorithms. 
\end{abstract}

\section{Introduction}
Recently, there has been a lot of interest in using machine learning to design improved algorithms for various computational problems.
This includes  work on data structures~\cite{kraska2017case, mitz2018model}, 
online algorithms~\cite{lykouris2018competitive,purohit2018improving,pmlr-v97-gollapudi19a,rohatgi2020near}, 
combinatorial optimization~\cite{khalil2017learning,balcan2018learning}, 
similarity search~\cite{wang2016learning}, 
compressive sensing~\cite{mousavi2015deep,bora2017compressed} and 
streaming algorithms~\cite{hsu2018learningbased}.
This body  of work is motivated by the fact that modern machine learning methods are capable of discovering subtle structure in collections of input data, which can be utilized to improve the performance of algorithms that operate on similar data.

In  this paper we focus on learning-augmented {\em online algorithms}. An on-line algorithm makes non-revocable decisions based only on the part of the input seen so far, without {\em any} knowledge of the future. It is thus natural to consider a relaxation of the model where the algorithm has access to (imperfect) predictors of the future input that  could be used to improve the algorithm performance.   Over the last couple of years this line of research has attracted growing attention in the machine learning and algorithms literature, for classical on-line problems such as caching ~\cite{lykouris2018competitive,rohatgi2020near}, ski-rental and scheduling~\cite{purohit2018improving,gollapudi2019online,lattanzi2020online}  and  graph matching~\cite{kumar2018semi}.
Interestingly, most of the aforementioned works conclude that the ``optimistic'' strategy of simply following the predictions, i.e., executing the optimal solution computed off-line for the predicted input,  can lead to a highly sub-optimal performance even if  the prediction error is small.\footnote{To the best of our knowledge the only problem for which this strategy is known to result in an optimal algorithm is the online bipartite matching, see \cref{ss:related} for more details.} For instance, for caching, even  a  single misprediction can lead to an unbounded competitive ratio~\cite{lykouris2018competitive}.

In this paper we show that, perhaps surprisingly, the aforementioned ``optimistic'' strategy leads to near-optimal performance for some well-studied on-line problems. We focus  on the  problem of  {\em page migration}~\cite{black1989competitive} (a.k.a.~{\em file migration}~\cite{bienkowski2012migrating} or {\em 1-server  with  excursions}~\cite{manasse1990competitive}).  Here, the algorithm is given a sequence $s$ of points (called {\em requests}) $s_1, s_2, \ldots $ from a metric space $(X,d)$,  in an  online fashion. The state of the algorithm is also a point from $(X,d)$. Given the next  request  $s_i$, the algorithm moves to its next state $a_i$  (at the cost of  $D \cdot d(a_{i-1},a_i)$, where $D$ is a parameter), and then ``satisfies''  the  request $s_i$ (at the cost of $d(a_i,s_i)$). The objective is to satisfy  all requests while minimizing the total cost. The problem  has been a focus on a large body of  research, see e.g., \cite{awerbuch1993competitive,westbrook1994randomized,chrobak1997page,bartal1997page,khorramian2016uniform,bienkowski2017dynamic}. The best known algorithms for this problem have competitive ratios of $4$ (a deterministic algorithm due to ~\cite{bienkowski2017dynamic}), $3$  (a randomized algorithm against adaptive adversaries due to~\cite{westbrook1994randomized}) and $2.618\ldots$  (a randomized algorithm against oblivious adversaries due to~\cite{westbrook1994randomized}). The original  paper~\cite{black1989competitive} also showed that the competitive ratio of any deterministic algorithm must be at least $3$, which was recently improved to $3+\epsilon$ for some $\epsilon>0$ by~\cite{matsubayashi20153+}. 

\paragraph{Our results} Suppose that we are given a {\em predicted}
request sequence $\hs$  that, in each interval of length $\epsilon D$, differs from the  actual sequence  $s$ on at most a fraction $q$ of positions, where $\epsilon,q \in(0,1)$ are the parameters (note that the lower the values of $\epsilon$ and $q$ are, the stronger our assumption is). Under this assumption we show  that the  optimal off-line solution for $\hat{s}$ is a $(1+\epsilon)(1+O(q))$-competitive solution for $s$ as long as  the parameter $q>0$ is a small enough constant. Thus, the competitive ratio of this prediction-based algorithm improves over the state of the art even if the number of errors is linear in the sequence  length, and tends to $1$ when  the error rate tends to $0$.\footnote{Note that if each interval of length $D$ has at most a fraction of $q$ of errors, then it is also the case that each interval of length $\sqrt{q}D$ has at most a fraction of $\sqrt{q}$ of errors. Thus, if $q$ tends to $0$, the competitive ratio tends  to $1$ even if the interval length remains  fixed.} Furthermore, to make the algorithm robust, we also design a ``fallback option'', which is triggered if the input sequence violates the aforementioned assumption (i.e., if the fraction of errors in the suffix of the current input sequence exceeds $q$). The fallback option ensures that the competitive ratio of the algorithm for {\em any} input sequence is at most $O(1/q)$. Thus, our final algorithm produces a near-optimal solution if the prediction error is small, while guaranteeing a constant competitive ratio otherwise. 

For the case when the underlying metric is {\em uniform}, i.e., all distances between distinct points are equal to $1$, we further improve the competitive ratio to $1+O(q)$ under the assumption that  each interval of length $D$ differs from the  actual sequence in at most $qD$ positions. That is,  the parameter $\epsilon$ is not needed in this case. Moreover,  any algorithm has a  competitive ratio of at least $1+\Omega(q)$. 

It is natural to wonder whether the same guarantees hold even when the predicted sequence differs from the actual sequence on at most a fraction of $q$ positions distributed \emph{arbitrarily} over $\hs$, as opposed to over chunks of length $\eps D$. We construct a simple example that shows that such a relaxed assumption results in the same lower bound as for the classical problem.

\subsection{Related Work}
\label{ss:related}

Multiple variations of the page migration problem have been studied over the years. For example, if the page can be {\em copied} as well as moved,  the problem has been studied under the name  of {\em file allocation}, see e.g., \cite{bartal1995competitive, awerbuch2003competitive,  lund1998competitive}. 
Other formulations add constraints on nodes capacities, allow dynamically changing networks etc. See  the survey~\cite{bienkowski2012migrating} for an overview.  

There is a large body of work concerning on-line algorithms working under  stochastic  or  probabilistic assumptions  about the input~\cite{Uncertainty}. In contrast, in this paper we do not make  such assumptions, and allow {\em worst case} prediction errors (similarly to ~\cite{lykouris2018competitive,kumar2018semi,purohit2018improving}). Among these works, our prediction error model  (bounding the fraction of mispredicted requests) is most similar to the ``agnostic'' model defined in~\cite{kumar2018semi}. The latter paper considers on-line matching in bipartite graphs, where a prediction of the graph is given in advance, but the final input graph can deviate from the prediction on $d$ vertices. Since each vertex impacts at most one matching edge, it directly follows that $d$  errors reduce the matching size by at most  $d$. In contrast, in our case a single error can affect the cost of the optimum solution by an arbitrary amount. Thus, our analysis requires a more detailed understanding of the properties of the optimal solution.

Multiple papers studied on-line algorithms that are given a small number of bits of {\em advice}~\cite{boyar2017online} and show that, in many scenarios, this can improve their competitive ratios. Those algorithms, however, typically assume that the advice is  error-free.

\section{Preliminaries}
\label{sec:preliminaries}
\paragraph{Page Migration}
In the classical version, the algorithm is given a sequence $s$ of points (called {\em requests}) $s= (s_i)_{i \in [n]} $ from a metric space $(X,d)$,  in an  online fashion. The state of the algorithm (i.e., the {\em page}), is also a point from $(X,d)$.
Given the next  request  $s_i$, the algorithm moves to its next state $a_i$  (at the cost of  $D \cdot d(a_{i-1},a_i)$, where $D> 1$ is a parameter), and then ``satisfies''  the  request $s_i$ (at the cost of $d(a_i,s_i)$). The objective is to satisfy  all requests while minimizing the total cost. We can consider a version of this problem where the algorithm is given, prior to the arrival of the requests, a \emph{predicted sequence} $\hs= (s^*_i)_{i \in [n]}$.
The (final) sequence $s$ is generated adversarially from   
$\hs$ and an arbitrary \emph{adversarial sequence} $s^\star = (s^*_i)_{i \in [n]}$. That is
either $s_i = \hat s_i$ or $s_i=s^*_i$.
If we do not make any assumptions on how well $s$ is predicted by $\hs$, then the problem is no easier than the classical online version. On the other hand, if $s=\hs$, then one obtains an optimal online algorithm, by simply computing the optimal offline algorithm. 
The interesting regime lies in between these two cases. We will make the following assumption throughout the paper, which roughly speaking demands that a $1-q$ fraction of the input is correctly predicted and that the $q$ fraction of errors is somewhat spread out.

\begin{definition}[Number of mismatches $m(\cdot)$]\label{definition:mismatches}
    Let $\cI$ be an interval of indices. We define $m(\cI) \eqdef \sum_{t \in \cI} 1_{s_t \neq \hat s_t}$ to be the number of mismatches between $s$ and $\hs$ within the interval $\cI$.
\end{definition}

\begin{assumption}\label{assumption:main}
Consider an interval $\cI$ of $s$ of length $\eps D$. For any $\cI$ it holds $m(\cI) \leq q \eps D$.
\end{assumption}

\begin{remark}
Relaxing \cref{assumption:main} by allowing the adversary to change an arbitrary $q$ fraction of the input results in the same lower bound as for the classical problem. 
To see this, consider an arbitrary instance on $qn$ elements that gives a lower bound of $c$ in the classical problem. Call this sequence of elements \emph{adversarial}. Let $\hs$ consists of $n$ elements being equal to the starting point. That is, $\hs$ is simply the starting position replicated $n$ times. Let $s$ be equal to the sequence $\hs$ whose suffix of length $qn$ is replaced by the adversarial sequence. Now, on $s$ defined in this way no algorithm can be better than $c$-competitive. Hence, in general this relaxation of \cref{assumption:main} gives no advantage.
\end{remark}

Our main results hold for \emph{general metric space}, where
for all $p,p', p''\in X$ all of the following hold:
$d(p,p)=0$, 
$d(p,p')>0$ for $p \neq p'$, 
$d(p,p')=d(p',p) $, and 
$d(p,p'') \leq d(p,p')+d(p',p'')$.
We obtain better results for \emph{uniform metric space},
where, $d(p,p')=1$ for $p \neq p'$.

\paragraph{Notation}
Given a sequence $s$, we use $s_i$ to denote the $i$-th element of $s$. For integers $i$ and $j$, such that $1 \le i \le j$, we use $s_{[i, j]}$ to denote the subsequence of $s$ consisting of the elements $s_i, \ldots, s_j$.

For a fixed algorithm, let $p_i$ be the position of the page at time $i$. In particular,   $p_0$ denotes the start position for all algorithms.

Given an algorithm $B$ that pays cost $C$ for serving $n$ requests, we denote by $C_{t_1,t_2}$ the cost paid by $B$ during the interval $[t_1,t_2]$.
We sometimes abuse notation and write $C_{t}$ as a shorthand for
$C_{0,t}$. In particular, $C$ denotes $C_{0,n}$ as well as $C_n$.
This notation is the most often used in the context of our algorithm $\ALG$ and the optimal solution $\OPT$, whose total serving costs are $A$ and $O$, respectively. 

\section{Proof Overview}
Our two main contributions are: algorithm $\ALG$ that is $(1+O(q))$-competitive provided 
\cref{assumption:main}; and, a black-box reduction from $\ALG$ to a $O(1 / q)$-competitive algorithm $\ALGRobust$ when \cref{assumption:main} does not hold.
In \cref{sec:overviewbrittle} we present an overview of $\ALG$, while an overview of $\ALGRobust$ is given in \cref{sec:overviewrobust}.

\subsection{$\ALG$ under assumption \cref{assumption:main}}\label{sec:overviewbrittle}
Algorithm $\ALG$ (given as \cref{alg:ALG}) simply computes the optimal offline solution and
moves pages accordingly. 

\begin{algorithm}[h]
	\caption{$\ALG(i, s, \hs)$ \label{alg:ALG}}
    \textbf{Input} The number $i$ of the next request. \\
    \textbf{Output} $s$ and $\hs$ are sequences as defined in \cref{sec:preliminaries}.
    
	\begin{algorithmic}[1]
    \State Let $p_i$ be the position of the page in the optimal algorithm at the $i$-th request with respect to $\hs$.
  
    \State Move the page to $p_i$ and serve the request $s_i$.
  \end{algorithmic}
\end{algorithm}

The main challenge in proving that $\ALG$ still performs well in the online setting lies in leveraging the optimality of $\ALG$ with respect to the offline sequence.
The reason for this is that, due to $s$ and $\hs$ not being identical, $\OPT$ and $\ALG$ may be on different page locations throughout all the requests. In addition to that, we have no control over which $q$ fraction of any interval of length $D$ is changed nor to what it is changed. In particular, if $s_i \neq \hs_i$, then $s_i$ and $\hs_i$ could be 
very  far from each other. To circumvent this, we use the following  way to argue about the offline optimality, that is, about the optimality computed with respect to $\hs$.

\FloatBarrier

We think of $\ALG$ ($\OPT$, respectively) as a sequence of page locations that are defined with respect to $\hs$ ($s$, respectively). These page locations do not change even if, for instance, the $i$-th online request to $\ALG$ deviates from $\hs_i$. Let $A_t$ ($O_t$, respectively) be the cost of $\ALG$ ($\OPT$, respectively) serving $t$ requests given by $s_{[1, t]}$. Similarly, let $\hA_t$ ($\hO_t$, respectively) be the cost of  $\ALG$ ($\OPT$, respectively) for serving the oracle subsequence $\hs_{[1, t]}$. In particular, $A_n$ is the cost of $\ALG$ (optimal on $\hs$) on the final sequence $s$, whereas $\hO_n$ is the cost of the optimal algorithm for $s$ on the predicted sequence $\hs$.
It is convenient to think of $\hO_n$ as the `evil twin' of $A_n$.

We have, due to optimality of $\ALG$ on the offline sequence,
\begin{align}
 A_{n} - O_{n}  & =   A_{n} -  \hA_{n} +  \hA_{n} -  O_n \leq    A_{n} -  \hA_{n} +  \hO_{n} -  O_n. \label{eq:mainidea}
\end{align}
The intuition behind this is best explained pictorially, which we do in \cref{fig:mainidea}. Here $\ALG$ is at $a$ and $\OPT$ is at $o$.
In the depicted example a request is moved from $s$ to $\hat s$.
This causes $A_n -\hat A_n$ to increase, however, at the same time, $\hat O_n - O_n$ decreases by almost the same amount.\footnote{We oversimplified here, since the right hand side of \eqref{eq:mainidea} only holds for the sum of all points, but a similar argument can be made for a single requests.}  In fact, one can show that for such a moved page the right hand side of \cref{eq:mainidea} will increase by no more than $2 \dist{a}{o}$.
For pages that are not moved, i.e., $s=\hat s$, the costs of $\ALG$ and $\OPT$ do not change. 
It remains to bound $\dist{a_t}{o_t}$, which we do next.
By triangle inequality, it holds that
\begin{align}
\dist{a_t}{o_t} & \leq  \dist{a_t}{s_t} + \dist{o_t}{s_t}
 \leq A_t -A_{t-1} +  O_t -O_{t-1}, \label{eq:upper-bound-dist-a_t-o_t}  
\end{align}

Consider an interval $(t_{i-1},t_i]$.
Let $c_{move}^{(t_{i-1},t_i]} $ be the total sum of moving costs for both $\OPT$ and $\ALG$ for the requests in the interval $(t_{i-1},t_i]$. 
As a reminder (see \cref{definition:mismatches}), for a given interval $\cI$, $m(\cI)$ is the number of mismatches between $s$ and $\hs$ within $\cI$.
From \cref{eq:upper-bound-dist-a_t-o_t}, we derive
\begin{align}
    A_{n} - O_{n} \leq & 2\sum_i m((t_{i-1},t_i])  
    \cdot
    \frac{ A_{t_i}-A_{t_{i-1}} + O_{t_i}-O_{t_{i-1}} - c_{move}^{(t_{i-1},t_i]}}{t_i-t_{i-1}}. \label{eq:An-On-upper-bound}
\end{align}
We would like the right hand side of \cref{eq:An-On-upper-bound} to be small, implying that $A_{n} - O_{n}$ is small as well. To understand the nature of the right hand side of \cref{eq:An-On-upper-bound} and what is required for it to be small, assume for a moment that $m((t_{i-1},t_i]) = \alpha (t_i-t_{i-1})$. Then, the rest of the summation telescopes to $A_n - O_n$, and \cref{eq:An-On-upper-bound} reduces to $A_n - O_n \le 2 \alpha (A_n - O_n)$. Now, if $\alpha$ is sufficiently small, e.g., $\alpha \le 2 q$, then we are able to upper-bound \cref{eq:An-On-upper-bound} by $4q (A_n + O_n)$ and derive
\begin{equation*}
    \frac{A_n}{O_n}     \leq    \frac{1+4q}{1-4q},
\end{equation*}   
which gives the desired competitive factor. 

So, to utilize \cref{eq:An-On-upper-bound}, in our proof we will focus on showing that $m((t_{i-1},t_i])$ is sufficiently smaller than $t_i-t_{i-1}$. However, this can be challenging as $\OPT$ is allowed to move often, potentially on every request which results in $t_i-t_{i-1}$ being very small. But, if $t_i - t_{i - 1}$ is too small, then \cref{assumption:main} gives no information about $m((t_{i-1},t_i])$. However, if intervals $t_i - t_{i - 1}$ would be large enough, e.g., at least $\beta D$ for some positive constant $\beta$, then from \cref{assumption:main} we would be able to conclude that $\alpha = O(q)$.
Since in principle $\OPT$ can move in every step, we design `lazy' versions of $\OPT$ and $\ALG$ that only move $O(1)$ times in any interval of length $D$. This will enable us to argue that $t_i - t_{i - 1}$ is not too small.
It turns out that the respective competitive factors of the lazy versions with respect to the original versions is very close, allowing us prove
\begin{equation*}
    \frac{A_n}{O_n}   \approx  \frac{A^{lazy}_n}{O^{lazy}_n}   \leq    (1 + \eps) \frac{1+O(q)}{1-O(q)}. 
\end{equation*}
\begin{figure}
    \centering
\includegraphics[width=0.5\textwidth]{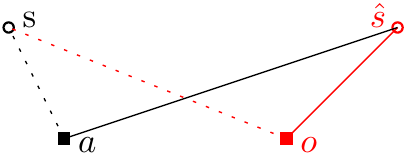}
    \caption{A pictorial representation of \cref{eq:mainidea}.}
    \label{fig:mainidea}
\end{figure}
\FloatBarrier

\subsection{$\ALGRobust$, a robust version of $\ALG$}\label{sec:overviewrobust}

We now describe $\ALGRobust$. This algorithm follows a ``lazy'' variant of $\ALG$ as long as \cref{assumption:main} holds, and otherwise switches to $\ALGOnline$.
Instead of using $\ALG$ directly, we use a `lazy' version of $\ALG$ that works as follows:
Follow the optimal offline solution given by $\ALG$ with a delay of $6qD$ steps.
Let $\ALGLazy$ be the corresponding algorithm. We point out that performing some delay with respect to $\ALG$ is crucial here. To see that, consider the following example in the case of uniform metric spaces: $s = \{0\}^n$ and $\hs = \{1\}^n$, and let the starting location be $0$. According to $\ALG$, the page should be moved from $0$ to $1$ in the very beginning, incurring the cost of $D$. On the other hand, $\OPT$ never moves from $0$. If $\ALGRobust$ would follow $\ALG$ until it realizes that the fraction of errors is too high, it would already pay the cost of at least $D$, leading to an unbounded competitive ratio. However, if $\ALGRobust$ delays following $\ALG$, then it gets some  ``slack'' in verifying whether the predicted sequence properly predicts requests or not. As a result, when \cref{assumption:main} holds, this delay increases the overall serving cost by a factor $O(1 + O(q))$, but in turn achieves a bounded competitive ratio when this assumption does not hold.

While serving requests, $\ALGRobust$ also maintains the execution of $\ALGOnline$, i.e., $\ALGRobust$ maintains where $\ALGOnline$ would be at a given point in time, in case a fallback is needed. 
Now $\ALGRobust$  simply executes $\ALGLazy$
 unless we find a violation of \cref{assumption:main} is detected.
 Once such a violation is detected, the algorithm switches to $\ALGOnline$ by moving its location
to $\ALGOnline$'s current location. From there on $\ALGOnline$ is executed.

We now present the intuition behind the proof for the competitive factor of the algorithm.
\paragraph{Case when \cref{assumption:main} holds.}
In this case $\ALGRobust$ is $\ALGLazy$, and the analysis boils down to proving competitive ratio of $\ALGLazy$.
We show that  $\ALGLazy$ is $(1+O(q))$-competitive to $\ALG$, which is, as we argued in the previous section, $1+O(q)$ competitive to $\OPT$.
To see this, we employ the following charging argument: whenever $\ALG$ moves from $p$  to  $p'$ it pays $D \cdot \dist{p}{p'}$. The lazy algorithm eventually pays the same moving cost of less.

However, in addition, the serving cost of $\ALGLazy$ for each of the $6qD$ requests is potentially increased, as $\ALGLazy$ is not at the same location as $\ALG$. Nevertheless, by triangle inequality, the cost due to the movement from $p$  to $p'$ of $\ALG$ reflect to an increase in the serving cost of $\ALGLazy$ by at most $\dist{p}{p'}$. In total over all the $6 q D$ requests and per each move of $\ALG$ from $p$ to $p'$, $\ALGLazy$ pays at most $6 q D \dist{p}{p'}$ extra cost compared to $\ALG$. Considering all migrations, this gives a $1+O(q)$ competitive factor.

\paragraph{Case when \cref{assumption:main} is violated.}
The case where \cref{assumption:main} is violated (say at time $t'$) is considerably more involved. We then have
\begin{align*}
    \ALGRobust \leq & \ALGLazy(0,t') + \ALGOnline(t'+1,n) 
     + D \cdot \dist{a}{a'},
\end{align*}
and we seek to upper-bound each of these terms by $O(\OPT/q)$. While the upper-bound holds directly for $\ALGOnline(t'+1,n)$, showing the upper-bound for other terms is more challenging.

The key insight is that, due to the optimality of $\ALG$,
\begin{equation}\label{eq:idea3}
    \dist{a}{p_0} \leq  OPT(t')/(qD),
\end{equation}
which can be proven as follows. If $\ALG$ migrates its page to a location that is far from the starting location $p_0$, then there have to be, even when taking into account noise, at least  $4qD$ page requests that are far from $p_0$. $\OPT$ also has to serve these requests (either remotely or by moving), and hence has to pay a cost of at least $q D \cdot \dist{a}{p_0} $.
Equipped with this idea, we can now 
bound $D \cdot \dist{a}{a'}$ in terms of $\OPT(t')/q$.
To bound
$\ALGLazy(0,t')$ we need one more idea. Namely, we compare $\ALGLazy(0,t')$ to the optimal solution that has a constraint to be at the same position as $\ALGLazy$ at time $t'$. A formal analysis is given in \cref{sec:robust-page-migration}.

\section{The Analysis of $\ALG$}
\label{sec:analysis-of-ALG}
Now we analyze $\ALG$ (\cref{alg:ALG}). As discussed in \cref{sec:overviewbrittle}, our main objective is to establish \cref{eq:An-On-upper-bound}, which we do in \cref{sec:main-technical-basis}. That upper-bound will be directly used to obtain our result for uniform metric spaces, as we present in \cref{sec:uniform-metrics}. To construct our algorithm for general-metric spaces, in \cref{sec:general-metrics} we build on $\ALG$ by first designing its ``lazy'' variant. As the final result, we show the following.
Recall that $q$ is the fraction of symbols that the adversary is allowed to change in any sequence of length $\varepsilon D$ of the predicted sequence. 
\begin{theorem}\label{thm:main}
    If \cref{assumption:main} holds with respect to parameter $\varepsilon$, then we obtain the following results:
    \begin{enumerate}[(A)]
        \item\label{item:the-general-metrics} There exists a $(1 + \eps) \cdot (1+O(q))$-competitive algorithm for the online page migration problem.
    
        \item\label{item:the-uniform-metrics} There exists a $(1 + O(q))$-competitive algorithm for the online page migration problem in uniform metric spaces.
    \end{enumerate}
\end{theorem}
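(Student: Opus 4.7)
My plan is to prove \cref{thm:main} by first establishing the main technical inequality \cref{eq:An-On-upper-bound} in full detail and then specializing the interval partition used in it to each metric class. Starting from the offline optimality $\hA_n \le \hO_n$, I would expand $A_n - O_n \le (A_n - \hA_n) + (\hO_n - O_n)$ time-step by time-step: only mismatch positions contribute, and at a mismatch $t$ a simple regrouping together with two uses of the triangle inequality gives
\begin{equation*}
(d(a_t, s_t) - d(a_t, \hat s_t)) + (d(o_t, \hat s_t) - d(o_t, s_t)) \;\le\; 2\,\dist{a_t}{o_t}.
\end{equation*}
I then split $[1,n]$ into maximal intervals $(t_{i-1}, t_i]$ inside which neither $\ALG$ nor $\OPT$ moves, so that $a_t \equiv a^{(i)}$ and $o_t \equiv o^{(i)}$ on each. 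Using $\dist{a^{(i)}}{o^{(i)}} \le \dist{a^{(i)}}{s_u} + \dist{o^{(i)}}{s_u}$ for any $u$ in the interval and averaging over $u$, the averaged right-hand side equals $\tfrac{1}{t_i - t_{i-1}}\bigl[(A_{t_i} - A_{t_{i-1}}) + (O_{t_i} - O_{t_{i-1}}) - c_{move}^{(t_{i-1}, t_i]}\bigr]$, and combined with $m((t_{i-1}, t_i])$ mismatches per interval this yields \cref{eq:An-On-upper-bound}.

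For part (B), the uniform case, I would argue that only inter-move intervals of length at least $D$ need to be analysed: in a uniform metric the moving cost is $D$ while the per-step remote-serving penalty is $1$, so a shorter interval's contribution can either be amortised onto a neighbour or absorbed into an $O(q)$ term without harm. On each interval of length $\ge D$, \cref{assumption:main} with $\eps = 1$ gives $m(\cI) \le q|\cI|$, so substituting into \cref{eq:An-On-upper-bound} and telescoping yields $A_n - O_n \le 2q(A_n + O_n)$, hence $A_n/O_n \le (1+2q)/(1-2q) = 1 + O(q)$.

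For part (A), the general case, I would introduce a lazy surrogate: $\ALGLazy$ postpones each move of $\ALG$ by $\eps D$ steps, and an analogous $\OLazy$ is used for $\OPT$ in the analysis. A postponed move from $p$ to $p'$ incurs at most $\eps D\cdot \dist{p}{p'}$ extra serving cost by triangle inequality, which is $\eps$ times the moving cost $D\cdot\dist{p}{p'}$; summing over all moves gives $\ALazy_n \le (1+\eps) A_n$ and $\OLazy_n \le (1+\eps) O_n$. Consecutive moves of $\ALGLazy$ and $\OLazy$ are then separated by at least $\eps D$ steps, so in the interval partition above every interval has length $\ge \eps D$ and \cref{assumption:main} enforces $m(\cI) \le q|\cI|$. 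Re-running the derivation of \cref{eq:An-On-upper-bound} for the lazy pair, using $\hat A^{lazy}_n \le (1+\eps)\hA_n \le (1+\eps)\hO_n \le (1+\eps)\hat O^{lazy}_n$ as the offline-optimality surrogate, yields $\ALazy_n/\OLazy_n \le (1+\eps)(1 + O(q))$, and combined with $\OLazy_n \le (1+\eps) O_n$ this gives the claimed $\ALazy_n / O_n \le (1+\eps)(1 + O(q))$ bound after absorbing constant-$\eps$ overhead.

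The main obstacle will be ensuring the offline-optimality relation $\hA \le \hO$ survives the laziness construction with only a bounded $(1+\eps)$ loss, since \cref{eq:An-On-upper-bound} rests critically on it. A secondary subtlety is that when the time axis is partitioned by combined $\ALG$/$\OPT$ moves, the boundary time steps must be accounted for consistently so that moving costs appear on exactly one side and the telescoping of serving-cost sums in \cref{eq:An-On-upper-bound} closes cleanly, leaving no stray additive terms.
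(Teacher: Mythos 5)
Your overall architecture matches the paper's: the same key inequality \cref{eq:An-On-upper-bound}, the same partition into maximal no-move intervals, a case analysis for uniform metrics, and a lazy surrogate for general metrics. However, there is a genuine gap in your treatment of part~(A). First, the construction ``$\ALGLazy$ postpones each move of $\ALG$ by $\eps D$ steps'' does not achieve what you need: shifting every move time by $\eps D$ preserves the gaps between consecutive moves, so if $\ALG$ moves on consecutive requests, so does your $\ALGLazy$, and the claim that ``every interval has length $\ge \eps D$'' fails. The paper instead restricts the lazy algorithms to move \emph{only at time steps that are multiples of $\eps D$}, which is what actually forces $t_i - t_{i-1} \ge \eps D$. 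Second, and more importantly, your offline-optimality surrogate chain contains an unjustified link: $\hO_n \le \widehat{\OLazy}_n$ compares the cost on $\hs$ of two algorithms neither of which is optimal for $\hs$ (one is $\OPT$ for $s$, the other its lazified version), and no domination holds in general. Even granting a correct multiplicative bound $\widehat{\ALazy}_n \le (1+\eps)\widehat{\OLazy}_n$, the telescoping in \cref{eq:An-On-upper-bound} uses the offline-optimality inequality \emph{additively} (via $\hA_t \le \hO_t$ inside $\Delta_i$), so a multiplicative slack leaves a stray term of order $\eps\,\widehat{\OLazy}_n$ that does not cancel. The paper's fix is to define $\ALGLazy$ as the \emph{offline optimum on $\hs$ within the restricted class} of algorithms moving only at multiples of $\eps D$; since $\OLazy$ lies in the same class, $\widehat{\ALazy} \le \widehat{\OLazy}$ then holds exactly and at no cost, and the $(1+\eps)$ factor enters only once, via $\OLazy_n \le (1+\eps) O_n$ (the paper's \cref{lem:lazy}). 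You correctly identified this as the main obstacle, but your proposed resolution does not close it.

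For part~(B), your claim that short intervals ``can be amortised onto a neighbour or absorbed into an $O(q)$ term'' needs to be replaced by the concrete argument: for an interval with $t_i - t_{i-1} < D$ one still has $m((t_{i-1},t_i]) \le qD$ (it sits inside a window of length $D$), the numerator after subtracting $c_{move}^{(t_{i-1},t_i]}$ is at most $2(t_i - t_{i-1})$ because each remote service costs at most $1$ in a uniform metric, so the interval contributes at most $4qD$; and since some move of cost $D$ occurs inside the interval, $D \le A_{t_i}-A_{t_{i-1}}+O_{t_i}-O_{t_{i-1}}$, so the total telescopes to $4q(A_n+O_n)$ exactly as in the long-interval case. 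With that made explicit, part~(B) is correct and follows the paper's route.
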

Note that \cref{thm:main} is asymptotically optimal with respect to $q$. Namely, any algorithm is at least $1+\Omega(q)$ competitive; even in the uniform metric case.
To see this consider the following binary example where the algorithm starts at position $0$. The advice is $ s = \underbrace{111\cdots 1111}_{(1-q)D}\underbrace{000\cdots 000}_{2qD}$.
The final sequence is
\[
	\hat s =
	\begin{cases}
		s & \text{  w.p. $1/2$}\\
		\underbrace{111\cdots 1111}_{(1+q)D} & \text{ otherwise}.
	\end{cases}
\]
In the first case $\OPT$ simply stays at $0$ since moving costs $D$; in the second case, $\OPT$ goes immediately to $1$. Note that $\ALG$ can only distinguish between the sequences after $(1-q)D$ steps at which point it is doomed to have an additional cost of $qD$ with probability at least $1/2$ depending on the sequence $s$.

\subsection{Establishing \cref{eq:An-On-upper-bound}}
\label{sec:main-technical-basis}
In our proofs we will use the following corollary of \cref{assumption:main}.
\begin{corollary}\label{corollary:number-of-flips}
If \cref{assumption:main} holds, then for any interval $\cI$ of length $\ell > \eps D$ it holds $m(\cI) \leq 2 q \ell$.
\end{corollary}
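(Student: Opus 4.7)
\medskip

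The plan is a direct covering argument: partition (or cover) the long interval $\cI$ into consecutive sub-intervals, each of length at most $\eps D$, apply \cref{assumption:main} to each sub-interval, and sum the bounds. The hypothesis $\ell > \eps D$ is exactly what guarantees that the ``rounding overhead'' incurred by needing a whole number of length-$\eps D$ windows is dominated by $\ell$ itself, so the constant $2$ falls out cleanly.

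Concretely, given an interval $\cI$ of length $\ell > \eps D$, let $k = \lceil \ell/(\eps D)\rceil$ and write $\cI$ as the union of $k$ sub-intervals $\cI_1,\dots,\cI_k$, each of length at most $\eps D$ (the last one possibly shorter). By monotonicity of $m(\cdot)$, each $\cI_j$ is contained in some length-$\eps D$ interval $\cI_j'$, so \cref{assumption:main} yields $m(\cI_j)\le m(\cI_j')\le q\eps D$. Summing over $j$ gives
\[
m(\cI) \;\le\; \sum_{j=1}^{k} m(\cI_j) \;\le\; k\, q \eps D.
\]

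It remains to bound $k \eps D$ in terms of $\ell$. Using $k < \ell/(\eps D) + 1$ we get $k\eps D < \ell + \eps D$, and then the hypothesis $\ell > \eps D$ gives $k\eps D < 2\ell$. Substituting back yields $m(\cI) < 2 q \ell$, which is the claimed bound.

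The argument is essentially routine, and I do not expect any real obstacle; the only subtlety is to make sure the covering argument is applied to a length-$\eps D$ super-interval (rather than assuming each $\cI_j$ has length exactly $\eps D$), since \cref{assumption:main} is literally stated for intervals of length $\eps D$. The use of $\ell > \eps D$ to convert $\ell + \eps D$ into $2\ell$ is the single quantitative step that produces the factor $2$ in the conclusion.
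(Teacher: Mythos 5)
Your proof is correct and follows essentially the same route as the paper: split $\cI$ into full length-$\eps D$ windows plus at most one shorter remainder, bound each via \cref{assumption:main} (embedding the remainder in a length-$\eps D$ super-interval), and use $\ell > \eps D$ to absorb the partial window into a second factor of $q\ell$. The only difference is cosmetic bookkeeping ($k\eps D < \ell + \eps D < 2\ell$ versus bounding the full windows and the remainder by $q\ell$ separately).
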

\begin{proof}
    This statement follows from the fact that each such $\cI$ can be subdivided into $k \ge 1$ intervals of length exactly $\eps D$ and at most one interval $\cI'$ of length less than $\eps D$. On one hand, the total number of mismatches for these intervals of length exactly $\eps D$ is upper-bounded by $q k \eps D \le q \ell$. On the other hand, since $\cI'$ is a subinterval of an interval of length $\eps D$, it holds $m(\cI') \le q \eps D < q \ell$. The claim now follows.
%
\end{proof}
Most of our analysis in this section proceeds by reasoning about intervals where neither $\ALG$ nor $\OPT$ moves. Let $t_1, t_2 \dots$ be the time steps at which either $\OPT$ or $\ALG$ move. The final product of this section will be an upper-bound on $A_n - O_n$ as given by \cref{eq:An-On-upper-bound}\footnote{As a reminder, $A_t$ ($O_t$, respectively) is the cost of  $\ALG$ ($\OPT$, respectively) at time for the sequence $s_{[1, t]}$.}, i.e.,
\[
    A_{n} - O_{n} \leq 2\sum_i m((t_{i-1},t_i])
    \cdot \frac{ A_{t_i}-A_{t_{i-1}} + O_{t_i}-O_{t_{i-1}} - c_{move}^{(t_{i-1},t_i]}}{t_i-t_{i-1}}.
\]


We begin by rewriting and upper-bounding $A_t - O_t$ as follows
\begin{align}
 A_{t} - O_{t}  & =   A_{t} -  \hA_{t} +  \hA_{t} -  O_t
 \leq    A_{t} -  \hA_{t} +  \hO_{t} -  O_t, \label{eq:A_t-O_t-upper-bound}
\end{align}
where we used that $\hA_t \le \hO_t$ as $\hA_t$ is the optimum for $\hs$.
Consider a fixed interval $\cI = (t_{i-1},t_i]$. Then, by triangle inequality, it holds
\begin{align}
\dist{a_t}{o_t} & \leq  \dist{a_t}{s_t} + \dist{o_t}{s_t}
 \leq
A_t -A_{t-1} +  O_t -O_{t-1}. \label{eq:asd2}  \footnote{As a reminder, $a_t$ ($o_t$, respectively) denotes the position of $\ALG$ ($\OPT$, respectively) at time $t$.}
\end{align}
Let $c_{move}^{(t_{i-1},t_i]} $ be the sum of moving costs for $\OPT$ and $\ALG$ in $(t_{i-1},t_i]$. Note that
\begin{align}
 A_{t_i}-A_{t_{i-1}} + O_{t_i}-O_{t_{i-1}} 
 & = \sum_{t \in (t_{i-1},t_i]}  \rb{A_t -A_{t-1} + O_t - O_{t-1}} \nonumber \\
 & \geq c_{move}^{(t_{i-1},t_i]}  + \dist{a_{t_i}}{o_{t_i}} |t_i-t_{i-1}|, \label{eq:gin}
\end{align}
where the inequality comes from \cref{eq:asd2} applied to every time step in ($t_{i-1},t_i]$ and the fact that  $\ALG$ or $\OPT$ must have moved inducing a cost of at least $c_{move}^{(t_{i-1},t_i]}$.
%
The following notation is used to represent the difference between serving $s_i$ and $\hs_i$ by $\ALG$
\begin{align*}
    A[t-1,t] & := A_t - \hA_t - (A_{t-1} - \hA_{t-1})
    = \dist{a_t}{s_t} - \dist{a_t}{\hs_t}.
\end{align*}
Note that this holds even when $\ALG$ moves since the moving costs for the oracle sequence and on the final sequence are the same and  therefore cancel each other out.
Similarly to $A[t-1,t]$, let
\begin{align*}
    \hO[t-1,t] & := \hO_t - O_t - (\hO_{t-1} -  O_{t-1})
    = \dist{o_t}{\hs_t} - \dist{o_t}{s_t}.
\end{align*}
Consider now any $t \in [1, n]$. By triangle inequality we have
\begin{align}
 A[t-1,t] +  \hat O[t-1,t] 
 & =  d(a_t, s_t) - d(o_t,s_t)   +  d(o_t, \hat s_t) -  d(a_t, \hat s_t)   \notag \\
&\leq \left( d(a_t,o_t) + d(o_t, s_t)  \right) - d(o_t,s_t)  + \left( d(a_t, \hat s_t)  + d(a_t,o_t) \right)-  d(a_t, \hat s_t) \notag  \\
& = 2 d(a_t,o_t) \stackrel{\eqref{eq:gin}}{\leq}  2\frac{ A_{t_i}-A_{t_{i-1}} + O_{t_i}-O_{t_{i-1}} - c_{move}^{(t_{i-1},t_i]}  }{t_i-t_{i-1}}. \label{eq:asd}
\end{align}

%
%

%
%
%

Let $\Delta_i = A_{t_i} -  \hat A_{t_i} +  \hat O_{t_i} -  O_{t_i}$, where $\Delta_0 = 0$ by definition.
Note that
\begin{align*}
 A_{n} - O_{n} &\stackrel{\eqref{eq:A_t-O_t-upper-bound}}{\le} A_n -  \hA_n +  \hO_n -  O_n
   = \sum_i \rb{\Delta_i - \Delta_{i-1}}\\
 &= \sum_i \sum_{t \in (t_{i-1},t_i]} \rb{A[t-1,t] +   \hO[t-1,t]}.
\end{align*}
Recall that, for a given interval $\cI$ the function $m(\cI)$ denotes the number of mismatches between $s$ and $\hs$ within $\cI$ (see \cref{definition:mismatches}). Now, as for $t$ such that $s_t = \hs_t$ we have $A[t-1,t] = \hat{O}[t-1,t] = 0$, the last chain of inequalities further implies
\begin{align}
 A_{n} - O_{n}
 & \stackrel{\cref{eq:asd}}{\leq} \sum_i   \sum_{t \in (t_{i-1},t_i]} 1_{s_t \neq \hat s_t} 
 \cdot  2\frac{ A_{t_i}-A_{t_{i-1}} + O_{t_i}-O_{t_{i-1}} -c_{move}^{(t_{i-1},t_i]}  }{t_i-t_{i-1}}\nonumber \\
  & \leq 2\sum_i  m((t_{i-1},t_i])
  \cdot \frac{ A_{t_i}-A_{t_{i-1}} + O_{t_i}-O_{t_{i-1}} -c_{move}^{(t_{i-1},t_i]} }{t_i-t_{i-1}}. \label{eq:asd4}
\end{align}
This establishes the desired upper-bound on $A_n - O_n$. As discussed in \cref{sec:overviewbrittle}, this upper-bound is  used to derive our non-robust results for uniform (\cref{sec:uniform-metrics}) and general (\cref{sec:general-metrics}) metric spaces. The main task in those two sections will be to show that $m((t_{i-1},t_i])$ is sufficiently smaller than $t_i-t_{i-1}$.

\subsection{Uniform Metric Spaces -- \cref{thm:main}~\eqref{item:the-uniform-metrics}}
\label{sec:uniform-metrics}
We now use the upper-bound on $A_n - O_n$ given by \cref{eq:asd4} to show that $\ALG$ is $(1 + O(q))$-competitive under \cref{assumption:main}, i.e., we show \cref{thm:main}~\eqref{item:the-general-metrics}. We distinguish between two cases: $t_{i}-t_{i-1} \geq D$; and $t_{i}-t_{i-1} < D$.

\paragraph{Case $t_{i}-t_{i-1} \geq D$.}
In this case, by \cref{corollary:number-of-flips} we have $m((t_{i-1},t_i])  \leq 2 q |t_i-t_{i-1}|$. Plugging this into \cref{eq:asd4} we derive
\begin{align*}
	A_{n} - O_{n} & \leq  
	2\sum_i  m((t_{i-1},t_i]) \cdot \frac{ A_{t_i}-A_{t_{i-1}} + O_{t_i}-O_{t_{i-1}} }{t_i-t_{i-1}}\\
	&\leq 4q \sum_i (A_{t_i}-A_{t_{i-1}} + O_{t_i}-O_{t_{i-1}}) \\
	& =4q(A_{n}+O_n).
\end{align*}

\paragraph{Case $t_{i}-t_{i-1} < D$.}
We proceed by upper-bounding all the terms in \cref{eq:asd4}. As the interval $(t_{i - 1}, t_i]$ is a subinterval of $(t_{i - 1}, t_{i - 1} + D]$, we have
\[
    m((t_{i-1},t_{i - 1} + D]) \le m((t_{i-1},t_i]) \leq qD.
\]
Also, observe that trivially it holds
\begin{equation}\label{eq:lower-bound-deltaA-deltaO}
	A_{t_i}-A_{t_{i-1}} + O_{t_i}-O_{t_{i-1}} \le 2|t_i - t_{i-1}| + c_{move}^{(t_{i-1},t_i]}. 
\end{equation}
Combining the derived upper-bounds, we establish
\begin{align}
	A_n - O_n 
	& \stackrel{\cref{eq:asd4}}{\leq} 2\sum_i  m((t_{i-1},t_i])  \cdot \frac{ A_{t_i}-A_{t_{i-1}} + O_{t_i}-O_{t_{i-1}} - c_{move}^{(t_{i-1},t_i]}}{t_i-t_{i-1}} \nonumber \\
	& \stackrel{\cref{eq:lower-bound-deltaA-deltaO}}{\leq} 2\sum_i  qD  \frac{  2(t_i-t_{i-1}) + c_{move}^{(t_{i-1},t_i]} - c_{move}^{(t_{i-1},t_i]}}{t_i-t_{i-1}} \\
	& = 4 q \sum_i D. \label{eq:uniform-metric-preliminary-A_n-O_n}
\end{align}
To conclude this case, note that by definition either $\ALG$ or $\OPT$ moves within $(t_{i - 1}, t_i]$, incurring the cost of at least $D$. Therefore, $A_{t_i} - A_{t_{i - 1}} + O_{t_i} - O_{t_{i - 1}} \ge D$. This together with \cref{eq:uniform-metric-preliminary-A_n-O_n} implies
\begin{align*}
	A_n - O_n &\leq 4 q \sum_i (A_{t_i}-A_{t_{i-1}} + O_{t_i}-O_{t_{i-1}})
	=4q(A_{n}+O_n).
\end{align*}

\paragraph{Combining the two cases.}
We have concluded that in either case it holds
$A_{n} - O_{n}  \leq  4q(A_{n}+O_n)$ and hence we derive
\[
    \frac{A_n}{O_n}     \leq    \frac{1+4q}{1-4q}.
\]
This concludes the analysis for uniform metric spaces.

\subsection{General Metric Spaces -- \cref{thm:main}~\eqref{item:the-general-metrics}}
\label{sec:general-metrics}
As in the uniform case, our goal for general metric spaces is to use \cref{eq:An-On-upper-bound} for proving the advertised competitive ratio. However, as we discussed in \cref{sec:overviewbrittle}, the main challenge in applying \cref{eq:An-On-upper-bound} lies in upper-bounding the ratio between $m((t_{i-1},t_i])$ and $t_i - t_{i - 1}$ by a small constant, ideally much smaller than $1$. Unfortunately, this ratio can be as large as $1$ as $\OPT$ (or $\ALG$) could possibly move on every single request. To see that, consider the scenario in which all the requests are on the $x$-axis and are requested in their increasing order of their location. Then, for all but potentially the last $D$ requests, $\OPT$ would move from request to request. To bypass this behavior of $\OPT$ and $\ALG$, we define and analyze their ``lazy'' variants, i.e., variants in which $\OPT$ and $\ALG$ are allowed to move only at the $i$-th request when $i$ is a multiple of $\eps D$.
We now state the algorithm.
\subsubsection{Our Algorithm $\ALGLazy$}
We use the following algorithm $\ALGLazy$:
Compute the optimal offline solution (on $\hat s$) while only moving on multiples of $\eps D$. Let $\ALazy$ be the cost of the solution $ s$ and let $\widehat{\ALazy}$ be the cost of the solution on $\hat s$.
Note that there can be better offline algorithms for $\hat s$, however $\ALGLazy$ has the minimal cost among all online algorithms that are only allowed to move every multiple of $\eps D$.

\subsubsection{Proof}
We also need to consider a lazy version of $\OPT$, which we do in the following lemma. There we show that making any algorithm lazy does not increase the cost by more than a factor of $(1+\varepsilon)$.
In particular, we will show $\OLazy \leq (1+\eps)\OPT$.
Let $\ALazy_t$ and $\OLazy_t$ denote their costs at time $t$.

\begin{lemma}\label{lem:lazy}
Let $\eps \in (0, 1]$. Consider an arbitrary prefix $w$ of length $t$ of a sequence of requests. Let $B_t$ be the cost of \emph{any} algorithm $\ALG_B$ serving $w$. Let $\BLazy_t{}'$ be the cost of the algorithm that \emph{has} to move at every time step that is a multiple of $\eps D$ (and is not allowed to move at any other time step), and to move to the position where $\ALG_B$ is at that time step. 
Then, we have
\[
	\BLazy_t{}' \leq (1+\eps) B_t.
\]
\end{lemma}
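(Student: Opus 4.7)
The plan is to partition the prefix $[1,t]$ into consecutive chunks of length $\eps D$ (with a possibly shorter last chunk) and to compare the moving and serving costs of the lazy algorithm to those of $\ALG_B$ chunk by chunk. For the $k$-th chunk $\cI_k = ((k-1)\eps D, k\eps D]$, let $b_i$ denote the position of $\ALG_B$ at time $i$ and let $M_k$ denote the moving cost of $\ALG_B$ inside $\cI_k$, so that the total moving cost of $\ALG_B$ equals $\sum_k M_k \le B_t$.

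For the moving cost of the lazy algorithm: at time $k\eps D$ it jumps directly from $b_{(k-1)\eps D}$ to $b_{k\eps D}$. Applying the triangle inequality to the actual trajectory $b_{(k-1)\eps D}, b_{(k-1)\eps D+1}, \ldots, b_{k\eps D}$ traced out by $\ALG_B$ inside $\cI_k$ gives $D \cdot d(b_{(k-1)\eps D}, b_{k\eps D}) \le M_k$. Summing over $k$, the total moving cost of the lazy algorithm is at most the total moving cost of $\ALG_B$.

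For the serving cost: inside $\cI_k$, the position of the lazy algorithm at any time $i$ equals either $b_{(k-1)\eps D}$ or $b_{k\eps D}$ (depending on the convention for when the forced boundary move occurs). In either case that position and $b_i$ both lie on the $\ALG_B$-trajectory inside $\cI_k$, so by a second application of the triangle inequality their distance is at most $M_k/D$. The excess per-request serving cost of the lazy algorithm over $\ALG_B$ is thus at most $M_k/D$, which summed over the at most $\eps D$ times in $\cI_k$ gives at most $\eps M_k$. Summing over chunks, the total excess serving cost is bounded by $\eps \sum_k M_k \le \eps B_t$.

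Combining the two estimates yields $\BLazy_t{}' \le B_t + \eps B_t = (1+\eps) B_t$, as desired. The main bookkeeping subtlety, and the only place to take care, is the handling of the (possibly incomplete) last chunk together with the convention for when within the boundary step the forced move at time $k\eps D$ is applied: neither issue affects the bound, since in the last chunk the lazy algorithm makes at most one fewer forced move (which only helps the moving-cost comparison), and the serving-cost triangle-inequality argument goes through for either boundary convention.
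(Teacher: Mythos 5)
Your proof is correct and follows essentially the same route as the paper's: both bound the lazy moving cost by the triangle inequality applied to $\ALG_B$'s trajectory within each length-$\eps D$ chunk, and both bound the excess serving cost per request by the trajectory length within the chunk (your $M_k/D$ is the paper's $\sum_{k\in\cI_j}x_k$), charged to the at most $\eps D$ requests in that chunk. No gaps; the remarks on the boundary convention and the incomplete last chunk are accurate.
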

\begin{proof}
Let $x_i$ be the distance of the $i$-th move and $y_i$ be the cost for serving the $i$-th request remotely. Then,
\[
    B_t = D \sum_i x_i + \sum_i y_i.
\]
Now we relate $B_t$ and $\BLazy_t{}'$.
$\BLazy_t{}'$ has two components: the moving cost and the cost for serving remotely. By triangle inequality, the moving cost is upper-bounded by $D \sum_i x_i $. Consider now interval $\cI_j \in [j \eps D + 1, (j + 1) \eps D]$ for some integer $j$. To serve point $i \in \cI_j$ remotely, the cost is, by triangle inequality, at most the cost of $y_i$ plus the cost of traversing all the points with indices in $\cI_j$ where $\ALG_B$ has moved to.
Thus the cost per request $i \in \cI_j$ is upper-bounded by $y_i + \sum_{k \in \cI_j} x_k$. Note that the summation $\sum_{k \in \cI_j} x_k$ is charged to $\eps D$ requests. Hence, summing over all the intervals gives
\[
    \BLazy_t{}' \leq D \sum_i x_i + \sum_i y_i  + \varepsilon D \sum_i x_i    \leq (1+\eps)B_t.
\]
\end{proof}
Define $\OLazy$ as the cost of the optimal algorithm for $s$ that is \emph{allowed} to move only at time steps which are multiple of $\eps D$. Similarly as in \cref{lem:lazy}, we have $\OLazy_n \leq (1+\eps) O_n$. Thus,
\begin{equation}\label{eq:frac-An-On-bounded-by-ALazy-OLazy}
    \frac{\ALazy_n}{O_n} \leq (1+\eps)\frac{\ALazy_n}{\OLazy_n}.
\end{equation}
Now we need to upper-bound $\tfrac{\ALazy_n}{\OLazy_n}$. We will do that by showing that the same statements as we developed in \cref{sec:main-technical-basis} hold for $\ALazy$ and $\OLazy$. To that end, observe that to derive \cref{eq:A_t-O_t-upper-bound} we used the fact that $\hA \le \hO$.
 Notice that the analog inequality $\widehat{\ALazy} \leq \widehat{\OLazy}$ holds, since $\ALGLazy$ is the the optimal offline algorithm that only moves every multiple of $\eps D$.


Hence, we can obtain the derivation \cref{eq:asd4} for $\ALazy_{n} - \OLazy_{n}$
\begin{align}
 \ALazy_{n} - \OLazy_{n} 
 & \leq  
2\sum_i  m((t_{i-1},t_i])  \cdot \frac{\ALazy_{t_i}-\ALazy_{t_{i-1}} + \OLazy_{t_i}-\OLazy_{t_{i-1}} }{t_i-t_{i-1}}. \label{eq:bound-on-ALazy-OLazy}
\end{align}
Since for the lazy versions we have $|t_i-t_{i-1}|=\eps D$, \cref{assumption:main} implies $m((t_{i-1},t_i])  \leq \eps q D$. Plugging this into \cref{eq:bound-on-ALazy-OLazy} gives
\begin{align*}
\ALazy_{n} - \OLazy_{n}
&\leq 2 q \sum_i \rb{\ALazy_{t_i}-\ALazy_{t_{i-1}} + \OLazy_{t_i}-\OLazy_{t_{i-1}}}
 =2q(\ALazy_{n}+\OLazy_n).
 \end{align*}
From \cref{eq:frac-An-On-bounded-by-ALazy-OLazy} we establish
\[
    \frac{\ALazy_n}{O_n} \leq (1+\eps) \frac{\ALazy_n}{\OLazy_n} \leq (1+\eps) \frac{1+2q}{1-2q}.
\]
This concludes the proof of \cref{thm:main}~\eqref{item:the-general-metrics}.

\section{Robust Page Migration}
\label{sec:robust-page-migration}
So far we designed algorithms for the online page migration problem that have small competitive ratio when \cref{assumption:main} holds. In this section we build on those algorithm and design a (robust) algorithm that performs well even when \cref{assumption:main} does not hold, while still retaining competitiveness when \cref{assumption:main} is true.
We refer to this algorithm by $\ALGRobust$.
For $\ALGRobust$ we prove the following.
\begin{theorem}
Let $\gamma$ be the competitive ratio of $\ALG$ for the online page migration problem, and let $q$ be a positive number less than $1/24$. If \cref{assumption:main} holds, then $\ALGRobust$ is $\gamma \cdot (1+O(q))$-competitive, and otherwise $\ALGRobust$ is $O(1/q)$-competitive.
\end{theorem}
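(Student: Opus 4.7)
The plan is to split the analysis into two cases mirroring the structure laid out in \cref{sec:overviewrobust}: (a) \cref{assumption:main} holds throughout the sequence; and (b) \cref{assumption:main} fails at some earliest time $t'$ at which the detector triggers. In case (a), $\ALGRobust$ coincides with $\ALGLazy$ throughout, so the goal reduces to showing $\ALGLazy \leq (1+O(q)) \cdot \ALG$, which combined with $\ALG \leq \gamma \cdot \OPT$ yields the first part of the theorem. In case (b), we decompose
\[
    \ALGRobust \leq \ALGLazy(0,t') + D \cdot \dist{a}{a'} + \ALGOnline(t'+1,n),
\]
and bound each of the three terms by $O(\OPT/q)$.

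For case (a) the charging argument proceeds as follows. $\ALGLazy$ follows the migration trajectory of $\ALG$ with a $6qD$-step delay, so by the triangle inequality its total moving cost is at most that of $\ALG$. The only additional overhead comes from the delay: whenever $\ALG$ migrates from $p$ to $p'$, there are at most $6qD$ requests for which $\ALGLazy$ is still serving from $p$ instead of $p'$, and triangle inequality gives an extra serving cost of at most $\dist{p}{p'}$ for each such request. Summing over all migrations, the cumulative overhead is at most $6q$ times the moving cost of $\ALG$, which is at most $6q \cdot \ALG$. Hence $\ALGLazy \leq (1+6q)\ALG \leq (1+O(q))\gamma \cdot \OPT$, as desired.

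For case (b) we first handle the easy term: since $\ALGOnline$ is a constant-competitive online algorithm for page migration (e.g., the $4$-competitive algorithm of Bienkowski et al.), $\ALGOnline(t'+1,n) \leq O(1) \cdot \OPT$ when started from any fixed location. The crux is bounding the first two terms, for which the key tool is the inequality $\dist{a}{p_0} \leq O(\OPT(t')/(qD))$ hinted at in \cref{eq:idea3}. This is established as follows: since $\ALG$ is the offline optimum on $\hs$, if the lazy page has reached location $a$, then $\hs$ must contain a constant fraction of requests near $a$ within the most recent interval of length $D$; even after subtracting the $qD$ mispredictions allowed in that window (by the very condition we are about to violate plus one more bad interval), at least $\Theta(qD)$ of the actual requests $s_i$ are near $a$, and $\OPT$ must pay at least $qD \cdot \dist{a}{p_0}$ either by migrating or serving remotely. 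Applying the same reasoning to the location $a'$ of $\ALGOnline$ and using triangle inequality yields $D \cdot \dist{a}{a'} \leq D(\dist{a}{p_0} + \dist{a'}{p_0}) \leq O(\OPT/q)$. Finally, to bound $\ALGLazy(0,t')$ we compare it to the optimal solution constrained to end at $\ALGLazy$'s position at time $t'$; by optimality of $\ALGLazy$ on $\hs$ (restricted to lazy algorithms) together with case (a)'s analysis on the prefix where \cref{assumption:main} still holds, this constrained optimum exceeds $\OPT(t')$ by at most $D \cdot \dist{o_{t'}}{a}$, which the key inequality again absorbs into $O(\OPT/q)$.

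The main obstacle I anticipate is making the key inequality $\dist{a}{p_0} \leq O(\OPT(t')/(qD))$ precise at the moment of detection. Because the violation has just been declared on some window, one cannot invoke \cref{corollary:number-of-flips} globally on the prefix; one must carefully partition $[0,t']$ into a "good" prefix where the assumption held and a single "bad" window of length $\eps D$, and argue that the good prefix alone supplies enough noise-free requests near $a$ to force $\OPT$ to incur the claimed cost. A secondary subtlety is handling very early detection, where $\ALGLazy$ has not yet completed any migration; there one argues directly that $\dist{a}{p_0}$ is bounded by the (small) distance the lazy algorithm has traveled, so the $O(1/q)$ ratio follows crudely from $\OPT \geq \Omega(1)$.
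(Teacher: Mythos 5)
Your overall skeleton is the paper's: the same two-case split, the same decomposition $\ALGRobust \leq \ALGLazy(0,t') + D\cdot\dist{a}{a'} + \ALGOnline(t'+1,n)$, the same lazy-delay charging for case (a), and the same key inequality $\dist{a}{p_0} \leq O(O_{t'}/(qD))$. The genuine gap is exactly the step you flag as the anticipated obstacle: your justification of that key inequality. You assert that if the page has reached $a$, then $\hs$ must contain a constant fraction of requests \emph{near $a$} within \emph{the most recent window of length $D$}. Neither localization holds: $\ALG$ can drift to $a$ gradually, justified by requests that are nowhere near $a$ and are spread over a period much longer than $D$ (think of requests marching along a line). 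The paper's \cref{lem:manypointsfar} avoids both pitfalls. It takes $\dmax$ to be the farthest any of $\ALG$'s page locations gets from $p_0$, lets $P$ be the maximal consecutive block of page locations at distance at least $r=\dmax/4$ from $p_0$ containing $\pmax$, and argues that if $P$ served fewer than $6qD$ requests at distance at least $r$ from $p_0$ in $\hs$, then replacing $P$ by parking at $p_0$ would save at least $D(\dmax-r)$ in movement while adding at most $2Dr + 6qD\,\dmax$, contradicting the optimality of $\ALG$ on $\hs$ when $q<1/24$. So at least $6qD$ \emph{predicted} requests lie at distance at least $r$ from $p_0$ (far from $p_0$, not near $a$, and not confined to one $D$-window); at least $4qD$ of them survive into $s$ because \cref{assumption:main} held up to just before $t'$, and $\OPT$ must pay at least $4qD\cdot r \geq qD\cdot\dist{a}{p_0}$ for them. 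Without this block/contradiction argument your sketch does not establish \cref{eq:OPTexpensive}.

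A second, smaller error: you bound $\dist{a'}{p_0}$ by ``applying the same reasoning'' to $\ALGOnline$'s location $a'$. That reasoning rests on the offline optimality of $\ALG$ on $\hs$ and has no analogue for an arbitrary online algorithm. The correct (and the paper's) bound is elementary: $D\cdot\dist{p_0}{a'}$ is at most the total movement cost of $\ALGOnline$ up to time $t'$, hence at most $\AOnline_{0,t'} = O(O_{t'})$, which together with \cref{eq:OPTexpensive} gives $D\cdot\dist{a}{a'} = O(O_{t'}/q)$. Your treatment of the remaining term, comparing $\ALazy_{0,t'}$ to the optimum constrained to end at $a$ and then relating that to the unconstrained optimum via $D\cdot\dist{a}{o} \leq D\cdot\dist{a}{p_0}+D\cdot\dist{p_0}{o}$, matches the paper.
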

Using our techniques it is straight-forward to obtain an arbitrary trade-off between the two competitive ratios. Fix an arbitrary $x\geq 1$, then
Algorithm $\ALGRobust$ is $(1+O(x\cdot q))$-competitive if \cref{assumption:main} holds and
$O(1/(x \cdot q))$-competitive otherwise.

    
    
    
        
        


        
        


\subsection{Algorithm $\ALGRobust$}
\label{sec:robust}
Let $\ALGOnline$ refer to an arbitrary  online  algorithm for the problem, e.g., \cite{westbrook1994randomized}.
We now define $\ALGRobust$. This algorithm switches from $\ALG$ to $\ALGOnline$ when it detects that \cref{assumption:main} does not hold.
Instead of using $\ALG$ directly, we use a ``lazy'' version of $\ALG$ that works as follows.
Follow the optimal offline solution given by $\ALG$ with a delay of $6qD$ steps.
Let $\ALGLazy$ be the corresponding algorithm. (A lazy version for different setup of parameters was presented in \cref{sec:general-metrics}.)

Throughout its execution, $\ALGRobust$ maintains/tracks in its memory the execution of $\ALGOnline$ on the prefix of $s$ seen so far. That is, $\ALGRobust$ maintains where $\ALGOnline$ would be at a given point in time in case a fallback is needed. 
Now $\ALGRobust$  simply executes $\ALGLazy$
 unless we find a violation of \cref{assumption:main} is detected.
 Once such a violation is detected, the algorithm switches to $\ALGOnline$ by moving its location
to $\ALGOnline$'s current location. From there on $\ALGOnline$ is executed.

We now analyze $\ALGRobust$ and show that in case \cref{assumption:main} holds, then $\ALG$ and $\ALGRobust$ are close in terms of total cost, and otherwise the cost of $\ALGRobust$ is at most $O(1/q)$ larger than that of $\ALGOnline$.

\paragraph{Case 1: \cref{assumption:main} holds for the entire sequence.}
In this case $\ALGRobust$ executes $\ALGLazy$ throughout.
%
%
Following the same argument for $\eps = 6q$ as given for $\ALazy{}'$ in the proof of \cref{lem:lazy}, we have
\begin{equation}\label{eq:belazy}
    \ALazy_t \leq (1+6q)A_t.
\end{equation}
Thus,
\[
    \ARobust = \ALazy_n \leq (1+6q)A_n \leq \gamma(1+O(q)) O,
\]
where we used the assumption that $\ALG$ is $\gamma$-competitive. This completes this case.

\paragraph{Case 2: \cref{assumption:main} is violated at the $t$-th request.}
Let $t'=t-qD + 1$. Note that up to this point in time no violation occurred.
We define the following:
$a$ is the position of $\ALGLazy$ at time $t'$;
$a'$ is the position of $\ALGOnline$ at time $t'+1$;
$o$ is the position of $\OPT$ at time $t'$; and,
$O^p_{0,t''}$ is the cost of $\OPT$ up to time $t''$ where we demand that $\OPT$ is at position $p$ at $t''$.

In the following, we assume the following holds. We defer the proof of its correctness for later.
\begin{equation}\label{eq:OPTexpensive}
    \dist{a}{p_0} \leq  O_{t'} / (qD).
\end{equation}
Intuitively, this means that we can bound the distance from the starting position by the cost of $\OPT$.

Using \cref{eq:OPTexpensive}, we get,
\begin{align}\label{eq:cafe}
    \ARobust &\leq \ALazy_{t'} + \AOnline_{t'+1,n}+ D \cdot \dist{a}{a'}.
\end{align}
As $O^a_{0,t'}$ and $\ALazy_{0,t'}$ are at the same position at time $t'$, inequality $\ALazy_{0,t'} \le (1 + c_1 q) O^a_{0,t'}$ follows from \cref{eq:belazy} for a suitable constant $c_1$.
Note that $O_{t'} \geq D \cdot \dist{p_0}{o}$, which holds since this cost is already incurred by moving to $o$, where we used triangle inequality. 

Next, using triangle inequality again, we get 
\begin{align}\label{eq:water}
    & \ALazy_{0,t'} \nonumber \\
    & \leq (1+c_1 q) O^a_{0,t'} \notag \\
&\leq (1+c_1 q)\left(O^o_{0,t'} + D \cdot \dist{a}{o} \right)  \notag \\
&\leq (1+c_1 q)\left(O^o_{0,t'} + D \cdot \dist{a}{p_0} + D \cdot \dist{p_0}{o} \right)  \notag \\
& \leq (1+c_1 q)\left(O^o_{0,t'} +  O_{t'}/q + O_{t'}\right)   \notag\\
&= O(O_{t'} /q).
\end{align}

Furthermore, using \cref{eq:OPTexpensive}, triangle inequality and  a simple lower bound on $\ALazy_{0,t'}$ as well as \cref{eq:water}, we get,
\begin{align}
    D \cdot \dist{a}{a'} & \leq D \cdot \dist{a}{p_0} +   D \cdot \dist{p_0}{a'} \nonumber \\
    & \leq
    O_{t'}/q +  \AOnline_{0,t'} \nonumber \\
    & \leq 
    2 O_{t'}/ q. \label{eq:tea}
\end{align}

Thus, plugging \cref{eq:tea} and \cref{eq:water} into  \cref{eq:cafe} and using
$\AOnline \leq O(O_n)$, we get
\begin{align*}
    \ARobust &\leq \ALazy_{t'} + \AOnline_{t'+1,n}+ D \cdot \dist{a}{a'} \\
    &= O(O_{t'}/q) + O(O_n)+ 2O_{t'}/q\\
    &= O(O_n/q).
\end{align*}

Thus, it only remains to prove \cref{eq:OPTexpensive}, as we do using the following lemma. That lemma shows that if $\ALG$ moves its page to a location that is far from $p_0$, then this means that there must be pages that are far from $p_0$. Later we will show that $\OPT$ pays considerable cost to serve them, even if done remotely. 
See \cref{fig:pmax} for an illustration of the lemma.

\begin{lemma}\label{lem:manypointsfar}
Let $\mathcal{P}=p_1, p_2, \dots$ be the sequence of page locations that $\ALG$ produces.
Let $p_{max}$ be the furthest point with respect to $p_0$ a page is moved to by the $\ALG$, i.e.,
\[
    \pmax \eqdef \arg\max_{p_i} \dist{p_i}{p_0}.
\]
In case that there are several pages at $\pmax$, we let $\pmax$ be the first among them. Let $\dmax \eqdef \dist{\pmax}{p_0}$.

Let $P$ be the maximal consecutive sequence of $\mathcal{P}$ including $\pmax$ consisting of pages that are each at distance at least $r \eqdef \dmax / 4$ from $p_0$. Then, for $q < 1/24$, it holds that the page locations in $P$ serve together at least $6qD$ points at distance $r$  from $p_0$ in the oracle sequence.
\end{lemma}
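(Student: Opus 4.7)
The plan is to proceed by contradiction via an exchange argument that exploits the offline optimality of $\ALG$ on $\hat{s}$. Let $[t_1, t_2]$ be the time window during which $\ALG$'s page lies in $P$, and set $q_1 = a_{t_1-1}$ and $q_2 = a_{t_2+1}$, the page positions immediately before and after this window; by the maximality of $P$, both satisfy $\dist{q_1}{p_0} < r$ and $\dist{q_2}{p_0} < r$. Suppose for contradiction that fewer than $6qD$ oracle requests in $[t_1, t_2]$ lie at distance at least $r$ from $p_0$. I will then construct an alternative algorithm $\ALG'$ that agrees with $\ALG$ outside $[t_1, t_2+1]$ but, during $[t_1, t_2]$, stays at $p_0$ (with single jumps $q_1 \to p_0$ at time $t_1$ and $p_0 \to q_2$ at time $t_2+1$), and show that $\ALG'$ has strictly smaller cost on $\hat{s}$, contradicting the optimality of $\ALG$.

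The first step is to lower-bound the moving-cost savings of $\ALG'$ over $\ALG$. Since any path of $\ALG$ in $[t_1, t_2+1]$ must pass through $\pmax$, its total length is at least $\dist{q_1}{\pmax} + \dist{\pmax}{q_2}$, which by the triangle inequality is at least $2(\dmax - r) = 3\dmax/2$, whereas the alternative's moving cost is at most $D(\dist{q_1}{p_0} + \dist{p_0}{q_2}) < 2Dr = D\dmax/2$. Hence $\ALG'$ saves at least $D\dmax$ in moving cost. Next I will upper-bound the extra serving cost of $\ALG'$: for each oracle request $\hat{s}_i$ with $i \in [t_1, t_2]$, the per-request extra is $\dist{p_0}{\hat{s}_i} - \dist{a_i}{\hat{s}_i}$. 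For a ``far'' request ($\dist{\hat{s}_i}{p_0} \ge r$), the triangle inequality yields an upper bound of $\dist{p_0}{a_i} \le \dmax$. For a ``close'' request ($\dist{\hat{s}_i}{p_0} < r$), combining $\dist{p_0}{\hat{s}_i} < r$ with the reverse-triangle bound $\dist{a_i}{\hat{s}_i} \ge \dist{a_i}{p_0} - \dist{p_0}{\hat{s}_i}$ gives a per-request upper bound of $2\dist{p_0}{\hat{s}_i} - \dist{a_i}{p_0}$, which is non-positive whenever $\dist{p_0}{\hat{s}_i} < r/2$ and is always strictly less than $r$. Summing, the total positive contribution is at most $N\dmax + M' r$, where $N < 6qD$ is the number of far requests and $M'$ is the number of close requests in the annulus $[r/2, r)$ around $p_0$.

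Optimality of $\ALG$ forces this total to be at least the moving savings, giving $N\dmax + M' r \ge D\dmax$, i.e., $4N + M' \ge 4D$. Combined with $N < 6qD$ and the hypothesis $q < 1/24$, this forces $M' > 4D - 24qD > 3D$, and this is where I expect the main obstacle: the single global exchange against $p_0$ cannot by itself rule out an over-populated annulus. The plan to close the argument is a second, local exchange: by pigeonhole I will locate a sub-interval of $[t_1, t_2]$ of length at least $D$ that contains only an $O(q)$-fraction of far requests (so mostly close-and-annulus requests), and on this sub-interval swap $\ALG$'s trajectory (which remains in $P$, at distance $\ge r$ from $p_0$) for a brief detour to $p_0$ (or, if needed, to an annulus request) before returning. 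Since the detour's entry and exit each cost at most $Dr = D\dmax/4$, while each very-close request yields a serving improvement of order $r$, a careful choice of sub-interval length will yield a strict net saving over $\ALG$ on $\hat{s}$, again contradicting optimality and completing the proof.
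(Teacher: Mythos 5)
Your first exchange is exactly the paper's: replace the whole excursion $P$ by a stay at $p_0$, lower-bound the moving cost saved (your bound of $D\dmax$ is in fact sharper than the paper's), and upper-bound the extra serving cost request by request via the triangle inequality. The divergence is in how the extra serving cost is controlled, and this is where your proposal has a genuine gap that you yourself flag. Your accounting correctly shows that requests at distance in $[r/2,r)$ from $p_0$ (your ``annulus'') are not paid for by the moving-cost savings, and your inequality $4N+M'\ge 4D$ only tells you that such requests must be numerous; it does not produce a contradiction. The second, ``local'' exchange you sketch does not close this: a detour to $p_0$ does not improve the serving cost of an annulus request (a request at distance $\rho\in[r/2,r)$ costs $\rho\ge r/2$ when served from $p_0$ but can cost as little as $r-\rho\le r/2$ when served from a location of $P$), and ``detour to an annulus request'' is not a well-defined move when the annulus requests are scattered around $p_0$ --- in a general metric there need not be any single location serving them all cheaply. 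So the plan as written does not yield the claimed strict saving, and the proof is incomplete.

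For comparison, the paper's proof does not distinguish close from far requests at all: it takes the contradiction hypothesis to mean that the locations of $P$ serve fewer than $6qD$ requests \emph{in total} (this is also how it bounds the number of locations in $P$ by $6qD$), and then charges each of these at most $6qD$ requests an extra serving cost of at most $\dmax$, for a total extra of at most $6qD\cdot\dmax$, which is beaten by the moving-cost savings precisely when $q<1/24$. Under that reading the single global exchange closes immediately and your annulus problem never arises; under the literal reading of the lemma statement (only the \emph{far} requests are assumed scarce), the difficulty you identified is real and the paper's argument is no more complete than yours. Either way, what you have submitted is a correct reproduction of the first step plus an unexecuted and, as described, unworkable plan for the remaining step, so it cannot be accepted as a proof of the stated lemma.
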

\begin{proof}
The proof proceeds by contradiction. Suppose that $P$ serves fewer than $6qD$ points in the oracle sequence.
We will show that a better solution consists of replacing the sequence $P$ by simply moving to $p_0$ and serving all points remotely from there. 
Since $P$ is a maximal sequence of $\mathcal{P}$ including $\pmax$ such that each page location is at distance $r$ from $p_0$, $\ALG$ moves by at least $\dmax - r$ within $P$. Hence, the cost of $\ALG$ using the page locations $P$ is at least 
\begin{equation}\label{eq:oldsol}
    D(\dmax-r) + \sum d_i,
\end{equation}
where the $\sum d_i$ represents the distances to pages served remotely from the page locations in $P$ (depicted as solid lines connected to $p,p_{max}$ and $p'$ in \cref{fig:pmax}).
Consider a request $s$ that is served from location $p$ in the original (using $P$) solution.
In the new solution, where all points are served from $p_0$, 
serving any request  has, by triangle inequality, a  cost of at most
$\dist{p_0}{p} + \dist{p}{s} \leq \dmax + \dist{p}{s}$.
Moreover, observe that the sequence $P$ consists of at most $6qD$ locations. This is because otherwise there would be a location that does not serve any points.
Putting everything together, the cost of the new solution is at most
\begin{equation}\label{eq:newsol}
    2D r + 6qD d_{max}+  \sum d_i,
\end{equation} 
where the $ 2D r$ accounts for moving the page from the location preceding $P$ to $p_0$ (the cost of at most $D r$) and moving the page from $p_0$ to the location just after $P$ (also the cost of at most $D r$).
Recall that $r = \dmax / 4$. Thus, \cref{eq:newsol} is cheaper than the solution \cref{eq:oldsol} for $q$ small enough (i.e., for $q < 1/24$), which contradicts the optimality of $\ALG$ of the oracle sequence. 
\end{proof}

By \cref{lem:manypointsfar},
we conclude that there are at least $6qD$ points at distance $r$ from $p_0$ in the oracle sequence. Note that the final sequence $s$ will contain at least $6qD-2qD$ of these points, due to our assumption on noise and the fact that up to the first violation of \cref{assumption:main} were detected as time $t$.
$\OPT$ has to serve these points as well and thus 
\[
    O_{t'} \geq (6 q D - 2 q D) r= 4q D \cdot \dmax/4 \geq qD \cdot \dist{a}{p_0},
\] 
which yields \cref{eq:OPTexpensive} and therefore completes the proof.
\medskip
\begin{figure}[ht!]
    \centering
\includegraphics[width=0.45\textwidth]{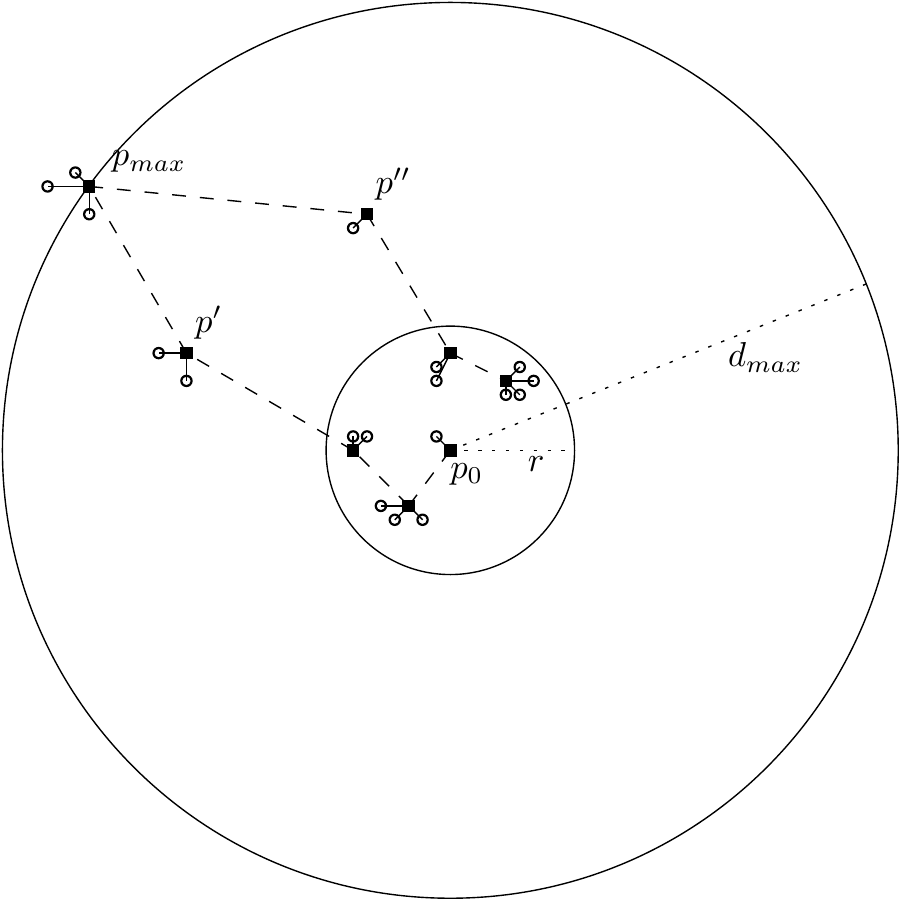}    \caption{An illustration of \cref{lem:manypointsfar}, where we argue that the reason we moved a page to a location far away (at distance $d_{max}$) from $p_0$ means that there must be  many points that are at least at distance $r=d_{max}/4$ from $p_0$. $\OPT$ will have to serve most of these points as well.  The squares denote location of pages, the small circles denote page requests, the solid lines between squares and small circles depict a remotely served request. The dashed lines denote the movement of the page. The sequence $P$ consists of $p',p_{max}$ and $p''$. }
    \label{fig:pmax}
\end{figure}

\section{Experiments}

We evaluate our approach on two synthetic data sets, and compare it to the state of the art algorithm for page migration due to Westbrook~\cite{westbrook1994randomized}.
The two data sets are obtained by generating ``predicted'' sequences of points in the plane, and then perturbing each point by independent Gaussian noise to obtain ``actual'' sequences. The predicted sequence is fed to our algorithm, while the actual sequence forms an input of the online algorithm. Recall that our algorithm sees the actual sequence only in the online fashion.

\paragraph{Data sets} The predicted sequences of the two  sets of points are generated as follows:
\begin{enumerate}
    \item {\bf Line process:}  the $t$-th point $(\hat{X}_1(t),\hat{X}_2(t))$ is equal to $(t,0)$.
    \item {\bf Brownian motion process:}   the $t$-th point $\hat{X}(t)$ is equal to $\hat{X}(t-1)+ (\Delta_1(t),\Delta_2(t))$, where $\Delta_t(t)$ and  $\Delta_2(t)$ are i.i.d.~random variables chosen from $N(0,1)$.
   
\end{enumerate}
Note that the predicted line process is completely deterministic whereas the Brownian motion points has, by definition, Gaussian noise.
In both cases, the actual sequence is generated by adding (additional) Gaussian noise to the predicted sequence:
the $t$-th request $X(t)$ in the actual sequence is equal to $\hat{X}(t)+(N_1(t),N_2(t))$, where  $N_1(t), N_2(t)$ are i.i.d.~random variables chosen from $N(0,\sigma^2)$.  The value of $\sigma$ varies, depending on the specific experiment. 
An example Brownian motion sequence is depicted in \cref{fig:my_label}.

\begin{figure}
    \centering
   \includegraphics[width=0.5\textwidth]{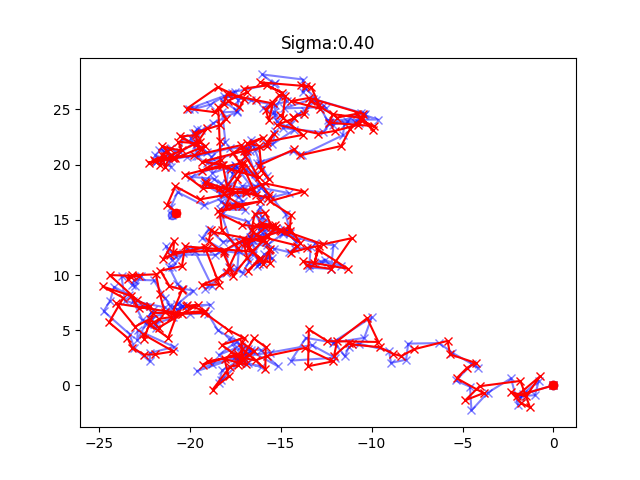}
    \caption{An example of Brownian motion sequence.  The predicted sequence is in blue,  the actual sequence is in red.}
    \label{fig:my_label}
\end{figure}

\paragraph{Set up} We use the two data sets to compare the following three algorithms:
\begin{itemize}
    \item \ourAlg refers to our algorithm, which computes the optimum solution for the predicted sequence (by using standard dynamic programming) and follows that optimum to serve actual requests.
    \item \optAlg is the optimum {\em offline} algorithm executed on the actual sequence. This optimum is  computed by using the same dynamic programming as in the implementation of \ourAlg.
    \item \onlineAlg is state-of-the-art online randomized algorithm for page migration that achieves $2.62$-approximation in expectation. This algorithm is described in Section 4.1 of~\cite{westbrook1994randomized}. Since it is randomized, on each input we perform $100$ runs of \onlineAlg and as the output report the average of all the runs. The standard deviation is smaller than $5\%$.
\end{itemize}

For both data sets, we depict the costs of the three algorithms as a function of either $D$ or $\sigma$. See the text above each plots for the specification.

\paragraph{Results} The results for the Brownian motion data set are  depicted in \cref{fig:brownian}. The top two figures show the cost incurred by each algorithm for fixed values of $\sigma$ and different  values of $D$, while the  bottom two  figures show the costs for fixed values of $D$ while $\sigma$  varies. Not surprisingly, for low values of $\sigma$, the costs \ourAlg and \optAlg are almost equal, since the predicted and the actual sequences are very close to each other. As the value of $\sigma$ increases, their costs starts to diverge. Nevertheless, the benefit of predictions is clear, as the cost of  \ourAlg is significantly lower than the cost of \onlineAlg . Interestingly, this holds even though the fraction of requests predicted {\em exactly} is very close to $0$.

The results  for the Line data set is depicted in \cref{fig:line}. They are qualitatively similar to those for Brownian motion. 

\begin{figure*}
	\centering
	\begin{subfigure}{0.4\textwidth}
		\centering
		\captionsetup{justification=centering}
		\includegraphics[width=1\linewidth]{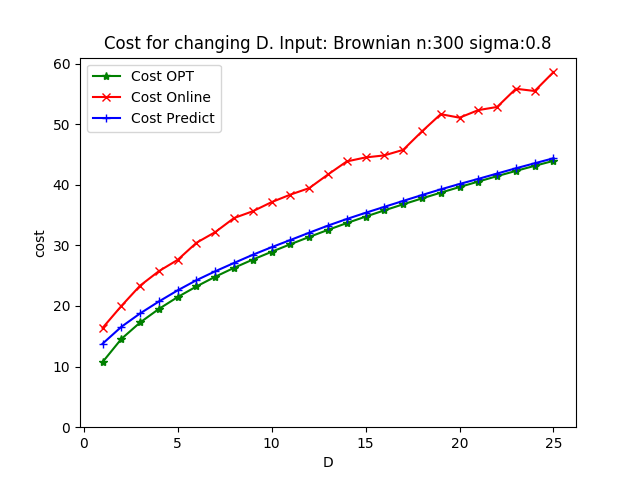}
		\caption{Fixed sigma, varying $D$. }
		\label{fig::brownian1}
	\end{subfigure}
	~
	\begin{subfigure}{0.4\textwidth}
		\centering
		\captionsetup{justification=centering}
		\includegraphics[width=1\linewidth]{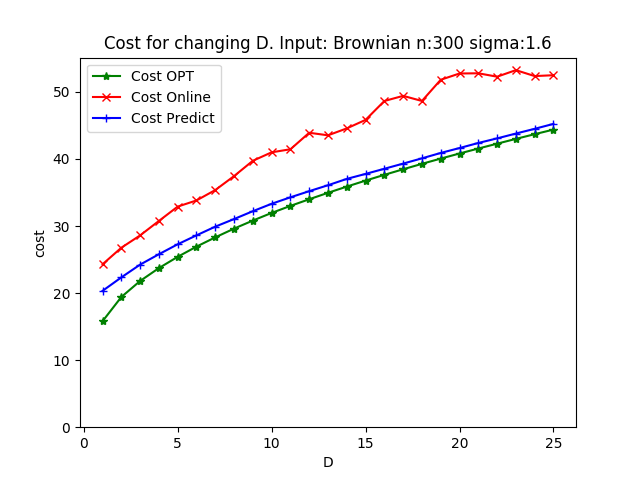}
		\caption{Fixed sigma, varying $D$. }
		\label{fig:brownian2}
	\end{subfigure}
	~
	\begin{subfigure}{0.4\textwidth}
		\centering
		\captionsetup{justification=centering}		
	\includegraphics[width=1\linewidth]{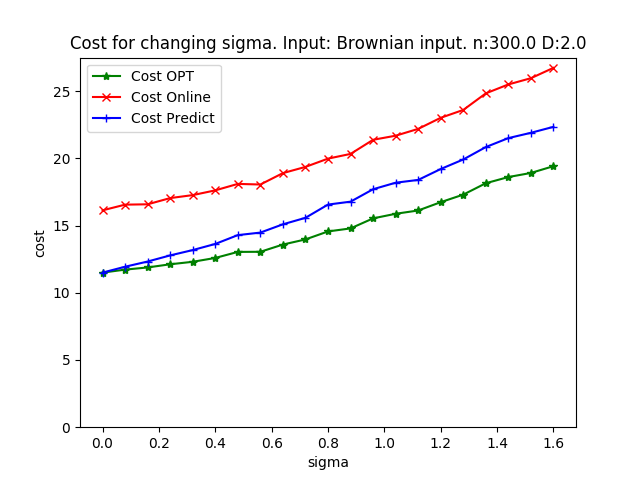}
		\caption{Fixed $D=2$, varying sigma. }
		\label{fig:brownian3}
	\end{subfigure}
	~
	\begin{subfigure}{0.4\textwidth}
		\centering
		\captionsetup{justification=centering}		
		\includegraphics[width=1\linewidth]{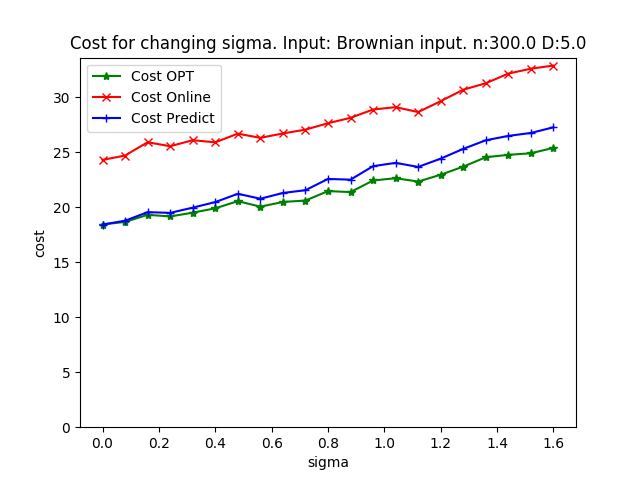}
		\caption{Fixed $D=5$, varying sigma.}
		\label{fig:brownian4}
	\end{subfigure}	
	\caption{Comparison between \ourAlg, \optAlg and \onlineAlg on Brownian motion data set.}
	\label{fig:brownian}
\end{figure*}

\begin{figure*}
\centering
	\begin{subfigure}{0.4\textwidth}
		\centering
		\captionsetup{justification=centering}		
	\includegraphics[width=1\linewidth]{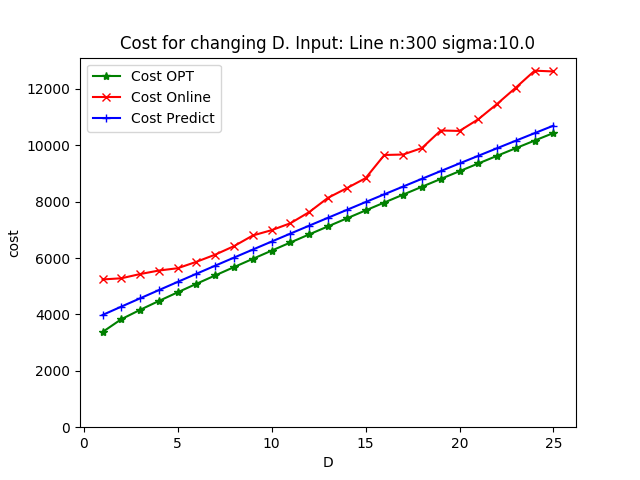}
		\caption{Fixed sigma, varying $D$. }
		\label{fig:line3}
	\end{subfigure}
~
\begin{subfigure}{0.4\textwidth}
		\centering
		\captionsetup{justification=centering}		
	\includegraphics[width=1\linewidth]{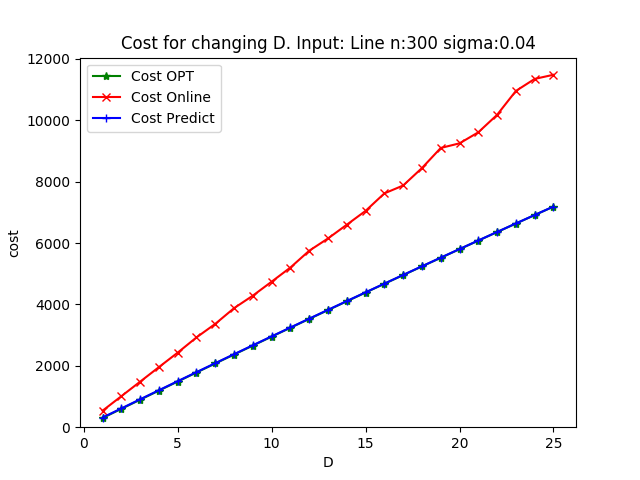}
		\caption{Fixed sigma, varying $D$. }
		\label{fig:someline}
	\end{subfigure}
	\centering
	
	\begin{subfigure}{0.4\textwidth}
		\centering
		\captionsetup{justification=centering}		
	\includegraphics[width=1\linewidth]{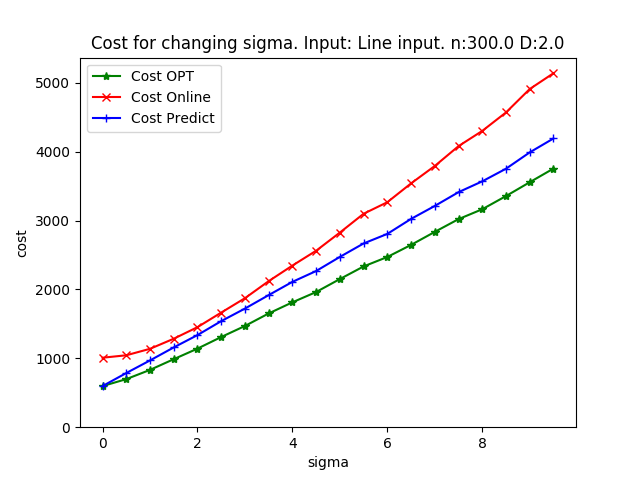}
		\caption{Fixed $D=2$, varying sigma. }
		\label{fig:doublicatedline3}
	\end{subfigure}
	~
	\begin{subfigure}{0.4\textwidth}
		\centering
		\captionsetup{justification=centering}		
		\includegraphics[width=1\linewidth]{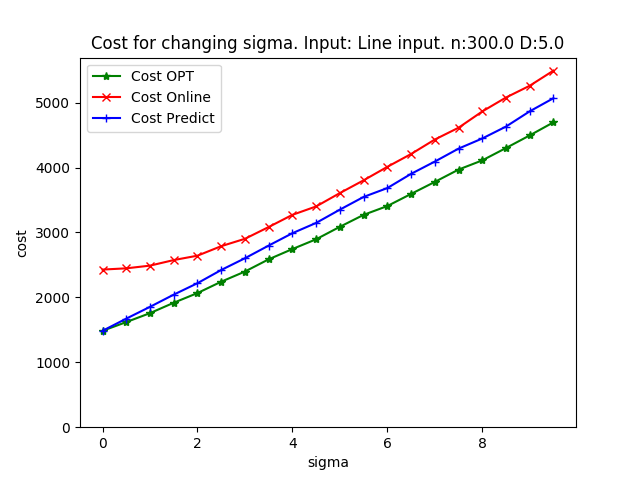}
		\caption{Fixed $D=5$, varying sigma.}
		\label{fig:line4}
	\end{subfigure}	
	\caption{Comparison between \ourAlg, \optAlg and \onlineAlg on Line data set.}
	\label{fig:line}
\end{figure*}

\newpage
\bibliographystyle{alpha}
\bibliography{ref,lib}

\newcommand{\etalchar}[1]{$^{#1}$}
\begin{thebibliography}{WLKC16}

\bibitem[ABF93]{awerbuch1993competitive}
Baruch Awerbuch, Yair Bartal, and Amos Fiat.
\newblock Competitive distributed file allocation.
\newblock In {\em STOC}, volume~93, pages 164--173, 1993.

\bibitem[ABF03]{awerbuch2003competitive}
Baruch Awerbuch, Yair Bartal, and Amos Fiat.
\newblock Competitive distributed file allocation.
\newblock {\em Information and Computation}, 185(1):1--40, 2003.

\bibitem[BBM17]{bienkowski2017dynamic}
Marcin Bienkowski, Jaroslaw Byrka, and Marcin Mucha.
\newblock Dynamic beats fixed: On phase-based algorithms for file migration.
\newblock {\em ICALP}, 2017.

\bibitem[BCI97]{bartal1997page}
Yair Bartal, Moses Charikar, and Piotr Indyk.
\newblock On page migration and other relaxed task systems.
\newblock {\em SODA}, 1997.

\bibitem[BDSV18]{balcan2018learning}
Maria-Florina Balcan, Travis Dick, Tuomas Sandholm, and Ellen Vitercik.
\newblock Learning to branch.
\newblock In {\em International Conference on Machine Learning}, pages
  353--362, 2018.

\bibitem[BFK{\etalchar{+}}17]{boyar2017online}
Joan Boyar, Lene~M Favrholdt, Christian Kudahl, Kim~S Larsen, and Jesper~W
  Mikkelsen.
\newblock Online algorithms with advice: A survey.
\newblock {\em ACM Computing Surveys (CSUR)}, 50(2):19, 2017.

\bibitem[BFR95]{bartal1995competitive}
Yair Bartal, Amos Fiat, and Yuval Rabani.
\newblock Competitive algorithms for distributed data management.
\newblock {\em Journal of Computer and System Sciences}, 51(3):341--358, 1995.

\bibitem[Bie12]{bienkowski2012migrating}
Marcin Bienkowski.
\newblock Migrating and replicating data in networks.
\newblock {\em Computer Science-Research and Development}, 27(3):169--179,
  2012.

\bibitem[BJPD17]{bora2017compressed}
Ashish Bora, Ajil Jalal, Eric Price, and Alexandros~G Dimakis.
\newblock Compressed sensing using generative models.
\newblock In {\em International Conference on Machine Learning}, pages
  537--546, 2017.

\bibitem[BS89]{black1989competitive}
David~L Black and Daniel~D Sleator.
\newblock {\em Competitive algorithms for replication and migration problems}.
\newblock Carnegie-Mellon University. Department of Computer Science, 1989.

\bibitem[CLRW97]{chrobak1997page}
Marek Chrobak, Lawrence~L Larmore, Nick Reingold, and Jeffery Westbrook.
\newblock Page migration algorithms using work functions.
\newblock {\em Journal of Algorithms}, 24(1):124--157, 1997.

\bibitem[GP19a]{pmlr-v97-gollapudi19a}
Sreenivas Gollapudi and Debmalya Panigrahi.
\newblock Online algorithms for rent-or-buy with expert advice.
\newblock In {\em Proceedings of the 36th International Conference on Machine
  Learning}, pages 2319--2327, 2019.

\bibitem[GP19b]{gollapudi2019online}
Sreenivas Gollapudi and Debmalya Panigrahi.
\newblock Online algorithms for rent-or-buy with expert advice.
\newblock In {\em International Conference on Machine Learning}, pages
  2319--2327, 2019.

\bibitem[HIKV19]{hsu2018learningbased}
Chen-Yu Hsu, Piotr Indyk, Dina Katabi, and Ali Vakilian.
\newblock Learning-based frequency estimation algorithms.
\newblock In {\em International Conference on Learning Representations}, 2019.

\bibitem[KBC{\etalchar{+}}18]{kraska2017case}
Tim Kraska, Alex Beutel, Ed~H Chi, Jeffrey Dean, and Neoklis Polyzotis.
\newblock The case for learned index structures.
\newblock In {\em Proceedings of the 2018 International Conference on
  Management of Data}, pages 489--504, 2018.

\bibitem[KDZ{\etalchar{+}}17]{khalil2017learning}
Elias Khalil, Hanjun Dai, Yuyu Zhang, Bistra Dilkina, and Le~Song.
\newblock Learning combinatorial optimization algorithms over graphs.
\newblock In {\em Advances in Neural Information Processing Systems}, pages
  6348--6358, 2017.

\bibitem[KM16]{khorramian2016uniform}
Amanj Khorramian and Akira Matsubayashi.
\newblock Uniform page migration problem in euclidean space.
\newblock {\em Algorithms}, 9(3):57, 2016.

\bibitem[KPS{\etalchar{+}}19]{kumar2018semi}
Ravi Kumar, Manish Purohit, Aaron Schild, Zoya Svitkina, and Erik Vee.
\newblock Semi-online bipartite matching.
\newblock {\em ITCS}, 2019.

\bibitem[LLMV20]{lattanzi2020online}
Silvio Lattanzi, Thomas Lavastida, Benjamin Moseley, and Sergei Vassilvitskii.
\newblock Online scheduling via learned weights.
\newblock In {\em Proceedings of the Fourteenth Annual ACM-SIAM Symposium on
  Discrete Algorithms}, pages 1859--1877. SIAM, 2020.

\bibitem[LRWY98]{lund1998competitive}
Carsten Lund, Nick Reingold, Jeffery Westbrook, and Dicky Yan.
\newblock Competitive on-line algorithms for distributed data management.
\newblock {\em SIAM Journal on Computing}, 28(3):1086--1111, 1998.

\bibitem[LV18]{lykouris2018competitive}
Thodoris Lykouris and Sergei Vassilvitskii.
\newblock Competitive caching with machine learned advice.
\newblock In {\em International Conference on Machine Learning}, pages
  3302--3311, 2018.

\bibitem[Mat15]{matsubayashi20153+}
Akira Matsubayashi.
\newblock A 3+ omega (1) lower bound for page migration.
\newblock In {\em 2015 Third International Symposium on Computing and
  Networking (CANDAR)}, pages 314--320. IEEE, 2015.

\bibitem[Mit18]{mitz2018model}
Michael Mitzenmacher.
\newblock A model for learned bloom filters and optimizing by sandwiching.
\newblock In {\em Advances in Neural Information Processing Systems}, pages
  464--473, 2018.

\bibitem[MMS90]{manasse1990competitive}
Mark~S Manasse, Lyle~A McGeoch, and Daniel~D Sleator.
\newblock Competitive algorithms for server problems.
\newblock {\em Journal of Algorithms}, 11(2):208--230, 1990.

\bibitem[MPB15]{mousavi2015deep}
Ali Mousavi, Ankit~B Patel, and Richard~G Baraniuk.
\newblock A deep learning approach to structured signal recovery.
\newblock In {\em Communication, Control, and Computing (Allerton), 2015 53rd
  Annual Allerton Conference on}, pages 1336--1343. IEEE, 2015.

\bibitem[PSK18]{purohit2018improving}
Manish Purohit, Zoya Svitkina, and Ravi Kumar.
\newblock Improving online algorithms via ml predictions.
\newblock In {\em Advances in Neural Information Processing Systems}, pages
  9661--9670, 2018.

\bibitem[Roh20]{rohatgi2020near}
Dhruv Rohatgi.
\newblock Near-optimal bounds for online caching with machine learned advice.
\newblock In {\em Proceedings of the Fourteenth Annual ACM-SIAM Symposium on
  Discrete Algorithms}, pages 1834--1845. SIAM, 2020.

\bibitem[Unc16]{Uncertainty}
{\em Special Semester on Algorithms and Uncertainty}, 2016.
\newblock \url{https://simons.berkeley.edu/programs/uncertainty2016}.

\bibitem[Wes94]{westbrook1994randomized}
Jeffery Westbrook.
\newblock Randomized algorithms for multiprocessor page migration.
\newblock {\em SIAM Journal on Computing}, 23(5):951--965, 1994.

\bibitem[WLKC16]{wang2016learning}
Jun Wang, Wei Liu, Sanjiv Kumar, and Shih-Fu Chang.
\newblock Learning to hash for indexing big data - a survey.
\newblock {\em Proceedings of the IEEE}, 104(1):34--57, 2016.

\end{thebibliography}

\end{document}